\pgfplotsset{compat=1.14, set layers}
\definecolor{refcolor}{rgb}{0.23, 0.27, 0.29}
\newtheorem{theorem}{Theorem}[section]
\newtheorem{lemma}[theorem]{Lemma}
\newtheorem{definition}[theorem]{Definition}
\newtheorem{remark}[theorem]{Remark}
\newtheorem{claim}[theorem]{Claim}
\newcommand{\arr}{\xrightarrow{}}
\newcommand{\larr}{\xleftarrow{}}
\newcommand{\local}{\textsf{LOCAL}\xspace}
\newcommand{\mpc}{\textsf{MPC}\xspace}
\newcommand{\rc}{$\varrho$\xspace}
\begin{document}

\begin{center}
    {\huge \bf Coloring Trees in Massively Parallel Computation} \\
\vspace{1cm}

\begin{minipage}[H]{5cm}
\begin{center}
    {\large \bf Rustam Latypov\footnotemark} \\
    {\large Aalto University} \\ \href{mailto:rustam.latypov@aalto.fi}{\texttt{rustam.latypov@aalto.fi}}
\end{center}
\end{minipage}
\begin{minipage}[H]{5cm}
\begin{center}
    {\large \bf Jara Uitto} \\
    {\large Aalto University} \\
    \href{mailto:jara.uitto@aalto.fi}{\texttt{jara.uitto@aalto.fi}}
\end{center}
\end{minipage}

\vspace{4mm}
\begin{center}
    \today \\ 
\end{center}

\vspace{1cm}
\begin{minipage}[H]{13.5cm}
\begin{center}
    \textbf{Abstract} \\ \vspace{4mm}
\end{center}
    We present $O(\log^2 \log n)$ time 3-coloring, maximal independent set and maximal matching algorithms for trees in the Massively Parallel Computation (\mpc) model. Our algorithms are deterministic, apply to arbitrary-degree trees and work in the low-space \mpc model, where local memory is $O(n^\delta)$ for $\delta \in (0,1)$ and global memory is $O(m)$. Our main result is the 3-coloring algorithm, which contrasts the randomized, state-of-the-art 4-coloring algorithm of Ghaffari, Grunau and Jin [DISC'20]. The maximal independent set and maximal matching algorithms follow in $O(1)$ time after obtaining the coloring. The key ingredient of our 3-coloring algorithm is an $O(\log^2 \log n)$ time adaptation of the rake-and-compress tree decomposition used by Chang and Pettie [FOCS'17], and established by Miller and Reif. When restricting our attention to trees of constant degree, we bring the runtime down to $O(\log \log n)$.
\end{minipage}
\end{center}

\thispagestyle{empty}
\footnotetext{Supported in part by the Academy of Finland, Grant 334238}

\vspace{8mm}
\section{Introduction}

When confronted with massive data sets that dwarf the memory capacity of any single machine, the need arises for robust distributed systems, where multiple machines coordinate in a decentralized fashion to efficiently process data in parallel. In the field of distributed algorithms, we study mathematical abstractions of distributed systems and the ultimate goal is to characterize the computational power of such systems. One of the earliest abstractions was the \local model by Linial \cite{linial} and in the last few decades, many other models have emerged such as \textsf{BSP} \cite{bsp}, \textsf{CONGEST} \cite{congest}, \textsf{CONGESTED CLIQUE} \cite{congestc3,congestc1,congestc2} and \mpc \cite{karloff,ref1,ref2,ref3,ref4,czumaj}. 

The Massively Parallel Computation (\mpc) model is a mathematical abstraction of modern data processing platforms such as MapReduce~\cite{dg04}, Hadoop~\cite{White:2012}, Spark~\cite{ZahariaCFSS10}, and Dryad~\cite{Isard:2007}. The low-space variant of the \mpc model is of particularly interest, as it captures the communication capabilities and the memory limitations of the real world.

\vspace{5mm}

\subsection{The \mpc Model}

In this model, we have $M$ machines who communicate in an all-to-all fashion. We focus on problems where the input is modeled as a graph with $n$ vertices, $m$ edges and maximum degree $\Delta$; we call this graph the \textit{input graph}. Each node has a unique ID of size $O(\log n)$ bits from a domain $\{1,2,\dots,N\}$, where $N = \text{poly}(n)$. Each node and its incident edges are hosted on a machine with $O(n^\delta)$ \textit{local memory} capacity, where $\delta \in (0,1)$ and the units of memory are words of $O(\log n)$ bits. When the local memory is bounded by $O(n^\delta)$, the model is called \textit{low-space} (or \textit{sublinear}). The number of machines is chosen such that $M = \Theta(m/n^\delta)$. For trees, where $m=\Theta(n)$, this results in $\Theta(n^{1-\delta})$ machines. For simplicity, we assume that each machine $i$ simulates one virtual machine for each node and its incident edges that $i$ hosts, such that the local memory restriction becomes that no virtual machine can use more than $O(n^\delta)$ memory. If a node cannot be stored on a single machine, as its degree is $\omega(n^\delta)$, one has to use some sort of a workaround. In this work we resort to having several smaller-degree copies of high degree nodes on separate machines. These machines can be arranged in a $n^\delta$-ary tree such that they can communicate in constant time and act as a single machine\footnote{As all of our computations consist of separable functions \cite{behnezhad2}, this workaround is sufficient.}.

During the execution of an \mpc algorithm, computation is performed in synchronous, fault-tolerant rounds. In each round, every machine performs some (unbounded) computation on the locally stored data, then sends/receives messages to/from any other machine in the network. Each message is sent to exactly one other machine specified by the sending machine. All messages sent and received by each machine in each round, as well as the output, have to fit into local memory. The time complexity is the number of rounds it takes to solve a problem. Upon termination, each node knows must know its own part of the solution.

The \textit{global memory} (or \textit{total memory}) use during the execution of an algorithm is the sum over the \textit{used} local memory over all machines. In this work, we restrict the global memory use to linear in the number of edges -- $O(m)$, which is the strictest possible as it is only enough to store a constant number of copies of the input graph. Note that if we were to allow superlinear $O(m^{1+\delta})$ global memory in constant-degree trees, many \local algorithms with complexity $O(\log n)$ could be exponentially sped up in the low-space \mpc model by applying the well-known graph exponentiation technique by Lenzen and Wattenhofer \cite{wattenhofer}. A crucial challenge that comes with the linear global memory restriction is that only a small fraction of $n^{1 - \delta}$ of the (virtual) machines can simultaneously utilize all of their available local memory.

\subsection{Related Works}

Fundamental graph problems such as coloring, MIS, maximal matching, minimum matching and minimum vertex cover have enjoyed a lot of attention in recent years in the low-space \mpc regime. For general graphs with maximum degree $\Delta$, Ghaffari and Uitto \cite{kaksikko} gave $\widetilde{O}(\sqrt{\log \Delta})$ time algorithms for MIS, maximal matching, $(1+\varepsilon)$-approximation of maximum matching, and 2-approximation of minimum vertex cover, which remain the best known. Only when considering special graph families, namely trees and other sparse graphs bounded by arboricity $\alpha$, the bound of $\widetilde{O}(\sqrt{\log \Delta})$ has been outperformed. Behnezhad et al. \cite{behnezhad2} gave MIS and maximal matching algorithms such that they first reduce problems in graphs with arboricity $\alpha$ to corresponding problems in graphs with maximum degree $\text{poly}(\alpha, \log n)$, and then by invoking the aforementioned algorithm of Ghaffari and Uitto \cite{kaksikko}, they arrive at MIS and maximal matching algorithms with time complexity $O(\sqrt{\log \alpha} \cdot \log \log \alpha + \log^2 \log n)$. Finally, a recent paper by Ghaffari et al. \cite{ghaffari} presents, to the best of our knowledge, the current state-of-the-art time complexities of $O(\sqrt{\log \alpha} \cdot \log \log \alpha + \log \log n)$ for MIS and maximal matching. As a secondary contribution, they provide an $O(\log \log n)$ time randomized algorithm for 4-coloring trees.

Regarding bounds, Ghaffari et al. \cite{focs} gave conditional lower bounds of $\Omega(\log \log n)$ for component-stable, low-space \mpc algorithms for constant approximation of maximum matching, constant approximation of vertex cover, and MIS. Their hardness results are conditioned on a widely believed conjecture in \mpc about the complexity of the connectivity problem.

\subsection{Preliminaries and Notations}

We work with undirected, finite, simple graphs $G = (V,E)$ with $n=|V|$ nodes and $m=|E|$ edges such that $E \subseteq [V]^2$ and $V \cap E = \emptyset$. In particular, we assume that $G$ is a graph with arboricity $\alpha=1$, i.e., a tree with $n$ nodes and $m=n-1$ edges. Let $\deg_G(v)$ denote the degree of a node $v$ in $G$ and let $\Delta$ denote the maximum degree of $G$. The distance $d_G(v,u)$ between two vertices $v,u$ in $G$ is the length of a shortest $v - u$ path in $G$; if no such path exists, we set $d_G(v, u) \coloneqq \infty$. The greatest distance between any two vertices in $G$ is the diameter of $G$, denoted by $\text{diam}(G)$. For a subset $S \subseteq V$, we use $G[S]$ to denote the subgraph of $G$ induced by nodes in $S$. Let $G^k$, where $k \in \mathbb{N}$, denote the $k$:th power of a graph $G$, which is another graph on the same vertex set, but in which two vertices are adjacent if their distance in $G$ is at most $k$. In the context of \mpc, $G^k$ is the resulting virtual graph after performing $\log k$ steps of graph exponentiation (Lemma \ref{lemma:exponentiation}).

For each node $v$ and for every radius $k \in \mathbb{N}$, we denote the $k$-hop (or $k$-radius) neighborhood of $v$ as $N^k(v) = \{ u \in V : d(v,u) \leq k\}$. The topology of a neighborhood $N^k(v)$ of $v$ is simply $G[N^k(v)]$. However, with slight abuse of notation, we sometimes refer to $N^k(v)$ both as the node set and the subgraph induced by node set $N^k(v)$. Neighborhood topology knowledge is often referred to as vision, e.g., node $v$ sees $N^k(v)$.

Let us also differentiate between the notions of the \textit{input graph} and the \textit{virtual graph} that we will use throughout this work. The input graph is the problem input, which we never change or remove throughout the execution of any algorithm.  The virtual graph on the other hand, is the toy graph that that is constructed during an algorithm in order to ease analysis. Nodes sharing an input edge are called neighbors and nodes sharing a virtual edge are called virtual neighbors. 

In the \mpc model, due to global communication, it is possible to collect neighborhoods exponentially faster that in the \local model using a by-now standard technique called graph exponentiation. The idea of graph exponentiation was first mentioned under the \textsf{CONGESTED CLIQUE} model by Lenzen and Wattenhofer \cite{wattenhofer}.

\begin{lemma}[Graph exponentiation] \label{lemma:exponentiation}
Assume that each node and its incident edges are hosted on a unique machine and that the $k$-hop neighborhood of each node contains at most $n^\delta$ nodes. All nodes $v$ in the graph learn their $k$-hop neighborhood in $O(\log k)$ \mpc rounds through graph exponentiation, when allowing $O(n^\delta)$ local memory and $O(n^{1+\delta})$ global memory.
\end{lemma}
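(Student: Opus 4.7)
The plan is to prove the lemma by the standard neighborhood-doubling argument of Lenzen and Wattenhofer, carefully tracking that the stated memory budgets are respected.

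First I would set up the doubling invariant. Let $r = \lceil \log k \rceil$ and define, for $i = 0, 1, \ldots, r$, the virtual graph $G_i$ in which $v$ and $u$ are adjacent iff $d_G(v,u) \le 2^i$ (truncating the last step so that $2^r$ is replaced by $k$). The inductive invariant I would maintain is that at the start of round $i$, every node $v$ stores the full topology of $N^{2^{i-1}}(v)$ on its machine. Round $0$ is the input assumption (each node knows its incident edges). Round $i$ proceeds as follows: each node $v$ sends the entire stored neighborhood $N^{2^{i-1}}(v)$ to every node $u$ in that same $N^{2^{i-1}}(v)$, and symmetrically receives $N^{2^{i-1}}(u)$ from each such $u$. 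Taking the union of the received topologies with its own, $v$ reconstructs $G[N^{2^i}(v)]$, since any path of length $\le 2^i$ from $v$ decomposes into two sub-paths of length $\le 2^{i-1}$ meeting at a midpoint. After $r = O(\log k)$ rounds the invariant gives $N^k(v)$ at every node.

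Next I would check the memory bookkeeping. By hypothesis, $|N^k(v)| \le n^\delta$ for every $v$, and since $2^i \le k$ throughout, we also have $|N^{2^i}(v)| \le n^\delta$ at every intermediate round. Hence the stored topology at each node is $O(n^\delta)$ words, so summing over the $n$ nodes gives $O(n^{1+\delta})$ global memory. For local memory, the persistent state per (virtual) machine is $O(n^\delta)$ as required.

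The one real obstacle is the transient communication load during a round: a node with up to $n^\delta$ virtual neighbors, each shipping $n^\delta$ words, could momentarily see $\Theta(n^{2\delta})$ words, which violates the $O(n^\delta)$ local memory bound. I would handle this exactly via the high-degree workaround already sanctioned in the preliminaries: whenever a virtual node $v$ would have more than $n^\delta$ incident virtual edges, replace it by a balanced $n^\delta$-ary tree of $O(|N^{2^i}(v)|/n^\delta) = O(1)$ auxiliary machines, each responsible for at most $n^\delta$ of the virtual neighbors and for an $O(n^\delta)$-sized chunk of the topology. The send/receive operations and the subsequent union are separable: each chunk can be exchanged and merged independently in constant additional rounds of aggregation along the tree. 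This keeps every machine within $O(n^\delta)$ local memory, preserves the global memory bound (the auxiliary machines only duplicate at most a constant factor of the node's stored data), and adds only a constant factor to the round count. Therefore the total round complexity remains $O(\log k)$, completing the proof.
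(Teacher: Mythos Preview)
Your proposal is correct and follows essentially the same neighborhood-doubling argument as the paper, which phrases progress as virtual-graph distances shrinking by a factor of at least $3/2$ per round rather than the known radius doubling. The paper's own proof is in fact terser than yours on the memory side and does not discuss the transient per-round message load at all, so your treatment is at least as complete.
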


\begin{proof}
Since there is a one-to-one correspondence between (virtual) machines and nodes, we can use the two terms interchangeably. Initially, all machines hold $N^1(\cdot)$ and the virtual graph is an identical copy of the input graph. In each iteration, all nodes connect their virtual neighbors with a virtual edge. In practice, connecting nodes (with a virtual edge) entails informing both nodes the ID of their new virtual neighbor. At the end of an iteration, all nodes drop any duplicate edges that they have in memory.

\begin{figure}[H]
	\centering
	\includegraphics[width=0.75\textwidth]{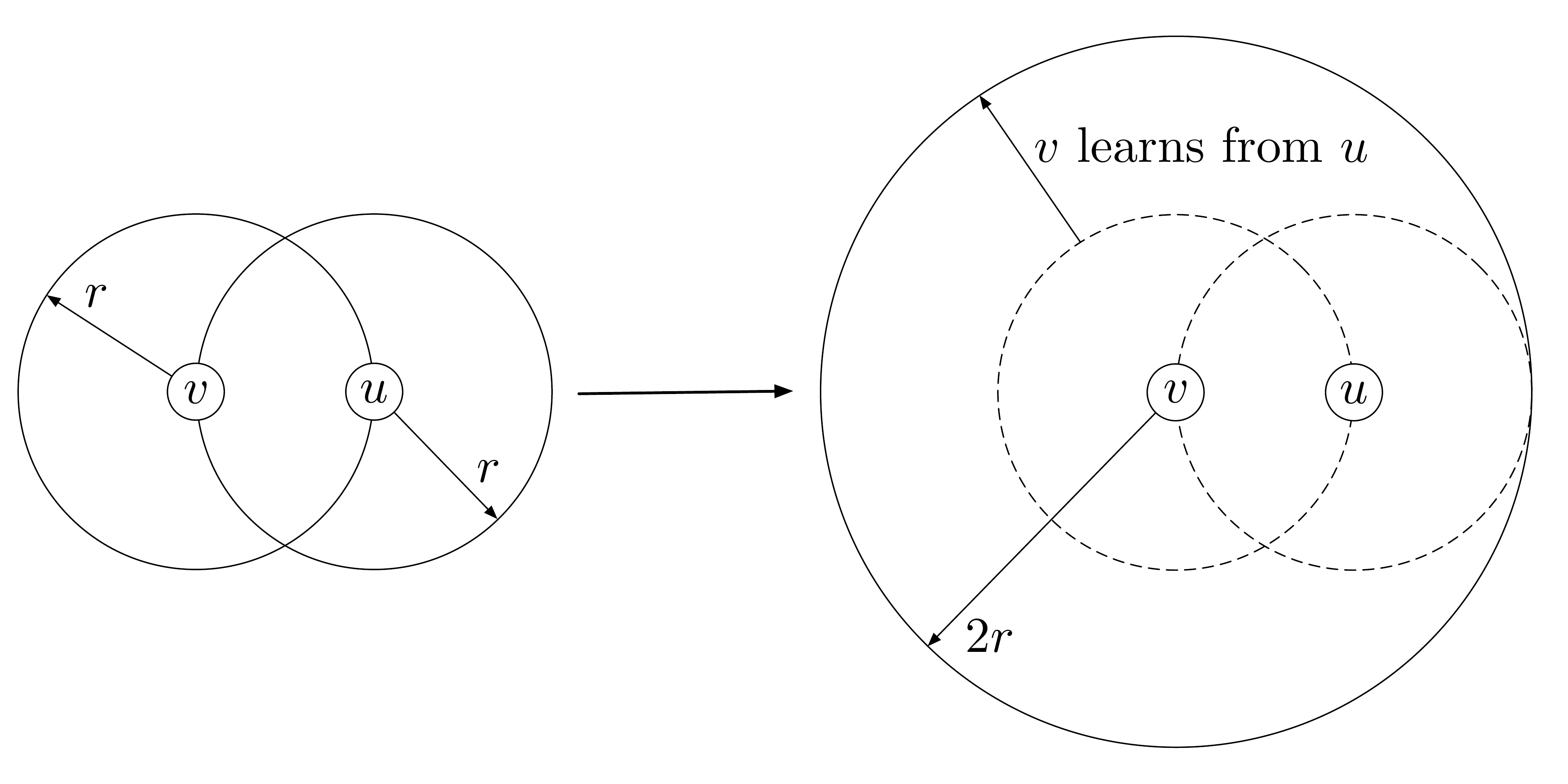}
	\caption{An illustration of node $v$ learning from nodes $u$ through graph exponentiation.}
	\label{fig:graphexpo}
\end{figure}

Since the distance (in the virtual graph) between any two nodes decreases by a factor of at least $3/2$ in each iteration, after $\log k$ iterations, each node is connected (in the virtual graph) to all nodes in its $k$-hop neighborhood, i.e., it has learned its $k$-hop neighborhood. Figure \ref{fig:graphexpo} illustrates $v$ learning from nodes $u$ at distance $r$. 
\end{proof}

Note that for trees, knowing your $k$-hop neighborhood is equivalent to knowing your $k$-hop neighborhood topology, since you can communicate with your $k$-hop neighborhood in constant time and the number of edges is bounded by the number of nodes. Fast neighborhood collection using graph exponentiation can be used to speed up \local algorithms due to the following well known equivalence of time complexity and neighborhood topology proven by, e.g., Kuhn et al. \cite{simu}.

\begin{remark} \label{rem:simu} There is a one-to-one correspondence between the time complexity of distributed algorithms in the \local model and the graph theoretic notion of neighborhood topology. Knowing your $t$-hop neighborhood allows instant simulation of any $t$ time \local algorithm.
\end{remark}

\subsection{Coloring, MIS and Matching}

A $k$-coloring of a graph $G=(V,E)$ is a special case of graph labeling such that each node $v \in G$ is assigned a color $c(v)$ such that $c : V \arr \{1,2,\dots,k\}$ for some $k$. A proper $k$-coloring is such that $c(v) \not= c(u)$ for all adjacent nodes $v,u \in V$. A maximal independent set (MIS) is a subset of nodes $I \subseteq V$ satisfying that (i) for each node $v \in I$, none of its neighbors are in $I$ and (ii) for each node $v \not \in I$, at least one of its neighbors is in $I$. A matching is a subset of edges $M \subseteq E$ such that each node has at most one incident edge in $M$. A matching is maximal if is not a proper subset of another feasible matching.

\section{Our Contributions and Roadmap}

Our main contribution is the following.

\vspace{1mm}
\fcolorbox{black}{black!2}{\begin{minipage}{0.98\textwidth}
		\textbf{Coloring.} There is a deterministic $O(\log^2 \log n)$ time 3-coloring algorithm for trees in the low-space \mpc model using linear global memory. Maximal independent set and maximal matching algorithms follow in $O(1)$ time. (Theorems \ref{thm:3coloring}, \ref{thm:mis}, \ref{thm:mm})
\end{minipage}}
\vspace{1mm}

Our deterministic result contrasts the current randomized state of the art for constant coloring, which is $O(\log \log n)$ by Ghaffari, Grunau, and Jin \cite{ghaffari}. We emphasize that the previous result is randomized and uses $4$ colors, which as opposed to 3 colors, enables the use of shattering techniques and coloring components with disjoint color palettes. Note that the $\Omega(\log \log n)$ conditional hardness results by Ghaffari, Kuhn and Uitto \cite{focs} imply that our algorithm is only a $\log \log n$ factor away from the optimal, at least when considering component-stable algorithms. 

In Theorems \ref{thm:bound3coloring}, \ref{thm:boundmis}, \ref{thm:boundmm}, when restricting our attention to constant-degree trees, we bring the runtime down to $O(\log \log n)$ for 3-coloring, MIS and maximal matching. 

Contrary to the introduction, we first show the constant-degree algorithm (Section \ref{sec:constalgo}), as it is only a simplified version of the general case and hence gives a nice introduction. After this, we give the general algorithm (Section \ref{sec:genalgo}). The key ingredient of our 3-coloring algorithm is performing a slightly modified version of the \local rake-and-compress tree decomposition of Chang and Pettie \cite{chang} almost exponentially faster in the low-space \mpc model. We note that the decomposition was established by Miller and Reif \cite{Miller}.

Even though our results apply to trees, as we will soon notice, an input graph may shatter into a forest during the coloring algorithms. This however is not a problem, since all arguments are local and apply to trees and forests alike. Note that this implies that the input graph could as well have been a forest. Nevertheless, we formulate our result in terms of trees to be in line with literature.

\section{Constant-Degree Algorithm} \label{sec:constalgo}

In this section, we will give an $O(\log \log n)$ time deterministic algorithm for 3-coloring constant-degree trees in the low-space \mpc model using $O(m)$ words of global memory. Our coloring algorithm consists of a decomposition stage (Section \ref{sec:bounddecompstage}) and a coloring stage (Section \ref{sec:boundcoloringstage}). 

\begin{theorem} \label{thm:bound3coloring}
	There is a deterministic $O(\log \log n)$ time 3-coloring algorithm for \emph{constant}-degree trees in the low-space \mpc model using $O(m)$ words of global memory.
\end{theorem}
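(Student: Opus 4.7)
The plan is to break the algorithm into a \emph{decomposition stage} that produces a (slightly modified) Miller--Reif / Chang--Pettie rake-and-compress layering in $O(\log \log n)$ \mpc rounds, followed by a \emph{coloring stage} that turns the layering into a proper $3$-coloring in $O(1)$ further rounds. Recall that LOCAL rake-and-compress alternately rakes leaves and compresses maximal degree-$2$ paths, terminates in $\ell^* = O(\log n)$ iterations, and assigns each node $v$ a layer index $\ell(v) \in \{1, \ldots, \ell^*\}$ that depends only on $v$'s $O(\log n)$-hop neighborhood.

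First I would accelerate the decomposition by iterated graph exponentiation. In phase $i = 1, 2, \ldots, O(\log \log n)$, each still-active node performs one step of Lemma~\ref{lemma:exponentiation} to double its vision, then locally simulates as many further rake-and-compress steps as its new $2^i$-hop view permits, marking itself ``done'' with its assigned layer as soon as the local simulation decomposes it. Since $\Delta = O(1)$, a $2^i$-hop ball has at most $\Delta^{2^i}$ nodes, which stays below $n^\delta$ for all $i \le O(\log \log n)$, so per-node local memory never exceeds $O(n^\delta)$. After $O(\log \log n)$ phases, each node has effectively simulated $\Theta(\log n)$ LOCAL rounds and therefore carries its final layer $\ell(v)$.

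The delicate point is global memory: only $O(m)=O(n)$ total words are available, so if every node simultaneously kept an $n^\delta$-sized ball the budget would be violated. I would control this by exploiting the shatter rate of rake-and-compress, namely that each LOCAL iteration removes a constant fraction of the still-undecomposed nodes. Hence at the start of phase $i$ the active set has size at most $n \cdot (1-c)^{2^{i-1}}$ for some constant $c$ depending on $\Delta$, and the product with $\Delta^{2^i}$ can be forced to $O(n)$ by the slight modification of the base decomposition alluded to in the paper (tuning the rake and compress thresholds so that the shatter rate dominates the neighborhood growth). This memory accounting, rather than the decomposition itself, is the step I expect to be the main obstacle.

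With layers in hand, the coloring stage is short. I would process layers from the highest (the final core) down to layer $1$; within each layer the surviving subgraph is a union of isolated vertices (raked) and short chains (compressed), each of which extends the partial coloring on the higher layers to a proper $3$-coloring in $O(1)$ rounds by picking, for each freshly colored node, any color in $\{1,2,3\}$ not already used by its at-most-two higher-layer neighbors. Since every node already knows the $\Theta(\log n)$-hop topology needed to replay the full $\ell^*$-layer sweep locally, the final coloring reduces to $O(1)$ additional \mpc rounds of propagation, yielding the claimed $O(\log \log n)$ total round complexity.
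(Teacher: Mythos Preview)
Your decomposition stage is essentially the paper's, and you correctly identify the global-memory accounting as the crux. One correction: you propose to balance ``shatter rate vs.\ neighborhood growth'' by \emph{tuning the rake-and-compress thresholds}. That knob does not help: the per-step removal fraction is only $\Theta(1/l)$, and no choice of $l$ makes $(1-\Theta(1/l))^{2^{i-1}} \cdot \Delta^{2^i} = O(1)$ for $\Delta \ge 2$. The paper instead fixes $l=3$ and, in phase $i$, runs $2c \cdot 2^{i-1}$ peeling steps before the exponentiation, where $c$ is the constant with $(1-1/4l)^c < 1/\Delta$; this guarantees at most $n/\Delta^{2^i}$ survivors when the $\Delta^{2^i}$-sized balls are formed. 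It is the number of peeling simulations per phase, not the decomposition parameter, that is tuned.

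The coloring stage has a genuine gap. You write that ``every node already knows the $\Theta(\log n)$-hop topology needed to replay the full $\ell^*$-layer sweep locally'', but this contradicts the very memory argument you just made: nodes peeled in phase $i$ only ever learned a $2^{i}$-hop ball, and endowing all $n$ nodes with $\Theta(\log n)$-hop vision would cost $\Theta(n\,\Delta^{\Theta(\log n)})$ global memory. Without that vision, processing the $O(\log n)$ layers top-down is a sequential $O(\log n)$-round procedure, not $O(1)$. Also, the ``short chains'' in a compressed layer are not short; they have length $\ge l$ and can be arbitrarily long, so you cannot greedily extend a partial coloring over them in $O(1)$ rounds.

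The paper's coloring stage is accordingly more delicate. It first $3$-colors each layer independently via Linial in $O(\log^* n)$ rounds (each layer has max degree $2$). Conflicts then occur only across layer boundaries, and since $l=3$ every node has at most one neighbor in a strictly higher layer; these cross-layer edges form, for each node, a \emph{dependency path} of length $O(\log n)$, not a full ball. The paper freezes the lowest $\log\log n$ layers (freeing $\Theta(\log n)$ words per remaining node), performs $O(\log\log n)$ rounds of \emph{directed} exponentiation along the dependency paths so that each node sees its entire path and can recolor itself consistently, and finally unfreezes and processes the bottom $\log\log n$ layers sequentially. This is where the $O(\log\log n)$ for the coloring stage actually comes from; it is not $O(1)$.
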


\begin{proof}
	Follows from Sections \ref{sec:bounddecompstage} and \ref{sec:boundcoloringstage}.
\end{proof}

\subsection{Decomposition} \label{sec:decomp}

Let us introduce a slightly modified version of the \local rake-and-compress decomposition of Chang and Pettie \cite{chang}, which we will henceforth refer to as a \rc decomposition.

\begin{definition}[\rc decomposition] \label{def:rc} In a \rc decomposition, the node set $V$ of a graph is split into sets $V_1,V_2,\dots,V_L$ such that $V=V_1 \cup V_2 \cup \dots \cup V_L$ and $V_i \cap V_j = \emptyset$ for all $i \not= j$. Parameter $L$ is called the size of the decomposition and the sets are referred to as layers. The properties are:
	
\begin{itemize}
	\item each node $v \in V_i$ has at most two neighbors in subgraph $\bigcup V_j,~i\leq j  \leq L$, and
	\item a subgraph induced by the nodes in any single layer consist of singletons and paths of length at least $l$, for some fixed constant $l$.
\end{itemize}

\end{definition}

Note that a \rc decomposition is parameterized by $l$. Since it is a constant, the decomposition can be constructed using the following \local algorithm.

\begin{enumerate}
	\item For $i=1,2,\dots, O(\log n)$:
	\begin{enumerate}
		\item In step $i$, nodes of degree-1 and nodes that belong to a path of length $\geq l$ place themselves in layer $i$.
	\end{enumerate}
\end{enumerate}

The runtime of $O(\log n)$ is proven by Chang and Pettie \cite{chang}[Section 3.9] by showing that at least a constant $1/2(l+1) \geq 1/4l$ fraction of nodes are eliminated in each step. Observe that this results in the size of the decomposition also being $O(\log n)$. For the remainder of this work, we are going to refer to one step of this \local algorithm as a \textit{peeling step}. The intuition being that we peel off the lowest layer of the decomposition.

\subsection{Decomposition stage} \label{sec:bounddecompstage}

For the purpose of our work, it is enough to consider a \rc decomposition with $l=3$. Note that this decision results in each node having at most one neighbor in a strictly higher layer. When working in the low-space \mpc model, we want to obtain the \rc decomposition exponentially faster than in the \local model. In constant-degree graphs, this is quite straightforward to achieve with the following \mpc algorithm.

Recall that at least a constant $1/4l$ fraction of nodes are eliminated in each step of the \local algorithm. In other words, there are at most a constant $1-1/4l$ fraction of nodes left in the graph after each step. Set constant $c \larr \text{argmin}_c \{c : (1-1/4l)^c < 1/\Delta \}$, and observe that since $\Delta$ and $l$ are constants, $c$ is also constant.

\begin{enumerate}
	\item For $i=1,2,\dots, \log \log n^\delta$:
	\begin{enumerate}
		\item In phase $i$, simulate $2^{i-1}$ peeling steps $2c$ times, and then perform one graph exponentiation step.
	\end{enumerate}
	\item Simulate $O(\delta \log n)$ peeling steps.
\end{enumerate}

The correctness of the algorithm follows from the fact that one peeling step corresponds to one step of the very simple \local algorithm described previously. Let us argue why we can simulate multiple peeling steps in one \mpc round. Performing one peeling step in the \local model takes $l$ time. Hence, knowing your $x$-hop neighborhood in \mpc enables the simulation of $x/l$ peeling steps. Since $l$ is only a constant, we can implicitly assume that the simulation is performed $l$ times, effectively simulating exactly $x$ peeling steps and increasing the runtime only by a constant factor. 

Let us analyze the runtime of the algorithm. Step 1 consists of phases, in which we simulate a number of peeling steps and perform one graph exponentiation step. Each phase takes constant time: due to graph exponentiation and the fact that $\Delta$ is constant, in the beginning of a phase $i$, all nodes see their $2^{i-1}$-neighborhoods, and $2c$ is constant. If the graph is not empty after $\log \log n^\delta$ phases, we proceed to Step 2. By now, all nodes see their $(\delta \log n)$-hop neighborhoods and are able to simulate $\delta \log n$ peelings steps in one \mpc round. Recall that the size of the decomposition is $O(\log n)$ and hence, after simulating $\delta \log n$ steps $O(1/\delta)$ times, the graph must be empty. We conclude that the runtime is $O(\log \log n)$.

Now for the memory analysis. At most, all nodes will see their $(\delta \log n)$-hop neighborhoods. Since $\Delta$ is constant, said neighborhoods will contain at most $n^\delta$ nodes (and edges), and local memory $O(n^\delta)$ is not violated. During Step 1, in the beginning of a phase $i$, all nodes see their $2^{i-1}$-hop neighborhoods. After simulating $2^{i-1}$ peeling steps $2c$ times, there are at most 
\begin{align*}
	n \cdot (1-1/4l)^{2^{i-1} \cdot 2c} < n / \Delta^{2^{i}}
\end{align*}

nodes left in the graph. The exponentiation step of phase $i$ requires at most $\Delta^{2^{i}}$ memory for each node. Hence, we use at most $O(m)$ global memory in each phase.

\subsection{Coloring stage} \label{sec:boundcoloringstage}

Consider having the \rc decomposition described in Definition \ref{def:rc} with $l=3$, which we get from the previous section. We start by performing two preprosessing steps. First, we color each layer (or more precisely, the subgraph induced by the nodes in each layer) of the decomposition separately, in parallel, in $O(\log^* N)=O(\log^* n)$ time by running Linial's $O(\Delta^2)$-coloring algorithm \cite{linial}. Recall that $N=\text{poly}(n)$ denotes the maximum ID in our ID space. Since each node has at most two neighbors in the same layer, running Linial's algorithm results in a $O(1)$-coloring within each layer. Since we have a proper constant coloring within each layer, we can perform the following. In each round, all nodes with the highest color among neighbors in the same layer recolor themselves with the smallest color such that a proper coloring is preserved within each layer. Clearly, one color is eliminated in each round and since $\Delta \leq 2$ within each layer, we achieve a 3-coloring within all layers in a constant number of rounds. We can think of this first pre-processing step as coloring each node with its temporary color, and then later recoloring each node with its final color.

Secondly, we partially orient the graph and call this orientation the \textit{dependency orientation}. Recall that when $l=3$, each node has at most one neighbor in a strictly higher layer. If a node has a neighbor in a strictly higher layer, it orients the edge outwards. We say that $v$ depends on neighbor $u$ if the edge connecting them is directed from $v$ towards $u$. We refer to all nodes that $v$ recursively depends on as the \textit{dependency path} of $v$. 

Since the first preprocessing step guarantees proper 3-colorings within each layer, color conflicts can only occur along the layer boundaries, i.e., only nodes with a dependency paths may require further attention (the temporary color of all other nodes are also their final color). It is straightforward to realize that any node $v$ with a dependency path can recolor itself with its final color if it sees its whole dependency path. Also, observe that since there are $O(\log n)$ layers in the \rc decomposition, each dependency path is of length at most $O(\log n)$. In our algorithm, the idea is for each node to learn its dependency path using a modified version of graph exponentiation, called \textit{directed exponentiation}.

\begin{definition}[Directed exponentiation]
	Directed exponentiation functions identically to regular graph exponentiation except for the following exceptions. Nodes only keep track of outwards oriented edges, resulting in nodes learning \emph{directed neighborhoods} during exponentiation. In each round of directed exponentiation, node $v$ informs nodes $u$ in its directed neighborhood that it wants to be connected with a virtual edge with all nodes in the directed neighborhood of $u$.
\end{definition}

Since all nodes without a dependency path are already colored with their final color, consider performing the following \mpc algorithm only for nodes with a dependency path. 

\begin{enumerate}
	\item Freeze the $\log \log n$ lowest layers. 
	\item Perform $\log \log n^\delta$ directed graph exponentiation steps. 
	\item All nodes that see their whole dependency path simulate its recoloring and using that information, compute their final color. Perform $O(1/\delta)$ such simulations. 
	\item Unfreeze the $\log \log n$ lowest layers and sequentially recolor their dependency paths.
\end{enumerate}

Correctness follows from the fact that each node can recolor itself with its final color when seeing its whole dependency tree. The runtime follows from the fact that we only perform $\log \log n^\delta$ directed exponentiation steps, recolor the dependency paths in the current highest $\delta \log n$ highest layers $O(1/\delta)$ times, and process $\log \log n$ lowest layers sequentially. 

Let us analyze the memory usage of our algorithm. By freezing the lowest $\log \log n$ layers, we allocate $\log n$ global memory for each node. Since the length of a dependency path stored by each node during the algorithm is at most $\delta \log n$, we do not violate local nor global memory. Note that the sequential recoloring of unfrozen layers does not require memory. An annoying detail associated with directed exponentiation is that even though each node $v$ depends on at most one node, multiple nodes may depend on $v$. Hence, during directed exponentiation, node $v$ may have to communicate with a large number of nodes in lower layers. To mitigate this issue, we intentionally perform only $\log \log n^\delta$ directed exponentiation steps, limiting the backwards communication to $n^\delta$ nodes and respecting the $O(n^\delta)$ total message bandwidth.

\subsection{MIS and Maximal Matching} \label{sec:boundmismm}

Maximal independent set and maximal matching algorithms follow from Theorem \ref{thm:bound3coloring}.

\begin{theorem} \label{thm:boundmis}
	There is a deterministic $O(\log \log n)$ time MIS algorithm for \emph{constant}-degree trees in the low-space \mpc model using $O(m)$ words of global memory.
\end{theorem}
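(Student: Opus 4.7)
The plan is to obtain the MIS as an essentially black-box corollary of Theorem \ref{thm:bound3coloring}. First I would invoke the $O(\log \log n)$ time 3-coloring algorithm on the input tree, which produces a proper coloring $c : V \to \{1,2,3\}$ using $O(m)$ global memory. After this, the MIS is computed by processing the three color classes sequentially in three additional \mpc rounds.

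More concretely, maintain a Boolean flag at every node indicating whether it is currently in the candidate MIS $I$ (initially false). For $i=1,2,3$ in order, run one \mpc round in which every node $v$ with $c(v)=i$ inspects all of its neighbors; if none of them has already joined $I$, then $v$ joins $I$, otherwise it does not. Since $\Delta = O(1)$, each node can exchange messages with all of its neighbors in a single round, and the entire loop finishes in $O(1)$ rounds, so the overall runtime remains $O(\log \log n)$.

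For correctness, independence is immediate from the joining rule: a node joins only if no neighbor has already joined in an earlier color class, and two nodes of the same color are never adjacent. For maximality, I would argue by induction on the color index: after processing color $i$, every node of color $\leq i$ either lies in $I$ or has a neighbor in $I$. In particular, after processing color $3$, every node is either in $I$ or has a neighbor in $I$, which is exactly the defining property of a maximal independent set.

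Finally, the memory analysis is trivial: the $O(m)$ global memory bound is inherited from Theorem \ref{thm:bound3coloring}, and the post-processing uses only one bit per node plus $O(\Delta)=O(1)$ messages per node per round, so neither local memory $O(n^\delta)$ nor global memory $O(m)$ is violated. There is no real obstacle here; the only mildly subtle point is making sure the sequential color-scan is truly $O(1)$ rounds, which holds precisely because the number of colors and the maximum degree are both constants.
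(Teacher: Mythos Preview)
Your proposal is correct and matches the paper's own proof essentially verbatim: invoke Theorem~\ref{thm:bound3coloring} to obtain a 3-coloring, then sweep through the three color classes greedily to build the MIS in $O(1)$ further rounds. Your write-up is in fact more detailed than the paper's, which simply states that nodes of color $i$ add themselves and their neighbors are removed.
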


\begin{proof}
	By Theorem \ref{thm:bound3coloring}, we can color the tree with 3 colors. For all colors $i$, perform the following. Nodes colored $i$ add themselves to the independent set, and all nodes adjacent to nodes colored $i$ remove themselves from the graph. Clearly this results in a maximal independent set in $O(1)$ time and the memory requirements are satisfied. 
\end{proof}

\begin{theorem} \label{thm:boundmm}
	There is a deterministic $O(\log \log n)$ time maximal matching algorithm for \emph{constant}-degree trees in the low-space \mpc model using $O(m)$ words of global memory.
\end{theorem}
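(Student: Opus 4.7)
The plan is to reduce maximal matching to the 3-coloring of Theorem \ref{thm:bound3coloring} together with a constant-round postprocessing routine that leverages the bounded degree. First, I would invoke Theorem \ref{thm:bound3coloring} to compute a proper 3-coloring $c:V \to \{1,2,3\}$ in $O(\log \log n)$ rounds. This coloring partitions the edge set into three bipartite subgraphs $G_{12}, G_{13}, G_{23}$, where $G_{ij}$ consists of the edges whose endpoints are colored $i$ and $j$. Since $G$ has constant maximum degree $\Delta$, so does each $G_{ij}$, and the bipartition of each $G_{ij}$ is known locally to every node from its own color.

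Next, I would process the three subgraphs sequentially, maintaining a global set $M$ of matched edges and the induced set of matched vertices. For each $(i,j) \in \{(1,2),(1,3),(2,3)\}$, I compute a maximal matching of $G_{ij}$ restricted to currently unmatched nodes by iterating the following propose-accept loop $\Delta$ times: every unmatched color-$i$ node with at least one unmatched color-$j$ neighbor proposes to its smallest-ID such neighbor; every color-$j$ node receiving proposals accepts the one with the smallest sender ID; the accepted pairs are added to $M$ and their endpoints are marked matched. After $\Delta$ iterations the residual subgraph of $G_{ij}$ contains no edge with both endpoints unmatched, because in every iteration each still-unmatched color-$i$ node either becomes matched or loses at least one unmatched color-$j$ neighbor (the one it proposed to gets matched, possibly with someone else), and a node with $\leq \Delta$ neighbors can lose at most $\Delta$ of them.

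Once the three subgraphs have been processed, I claim $M$ is a maximal matching in $G$. Any edge $(u,v)$ with $c(u)=i<c(v)=j$ lies in $G_{ij}$; if both endpoints were unmatched after the processing of $G_{ij}$, this would contradict the loop's termination condition above. The total running time is $O(\log \log n)$ for the coloring plus $O(\Delta)=O(1)$ for the three propose-accept phases, and the global memory usage remains $O(m)$ since the postprocessing only piggybacks on the input graph representation.

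The main obstacle I expect is the formal correctness of the propose-accept loop, which requires a careful but essentially bookkeeping argument: namely, showing that within $\Delta$ iterations the residual bipartite subgraph $G_{ij}$ restricted to unmatched nodes has no edges, via the monovariant that each unmatched color-$i$ vertex strictly loses available unmatched neighbors in every iteration it survives. Everything else (applying Theorem \ref{thm:bound3coloring}, fitting computation into $O(n^\delta)$ local memory given constant $\Delta$, and fitting messages into $O(m)$ global memory) follows directly from the setup of Section \ref{sec:constalgo}.
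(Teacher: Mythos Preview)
Your reduction is correct, but it differs from the paper's argument in an interesting way. The paper does not use only the 3-coloring; it also uses the $\varrho$-decomposition that was computed en route to Theorem~\ref{thm:bound3coloring}. It orients every edge toward the endpoint in the higher layer (ties broken by ID), so that each node has at most two parents regardless of $\Delta$ (this is exactly the ``at most two neighbors in $\bigcup_{j\ge i} V_j$'' property of Definition~\ref{def:rc}). Then, for each color class in turn, every node of that color proposes to its at most two parents and a parent accepts the highest-ID suitor; two proposal rounds per color suffice. Your approach instead forgets the decomposition entirely, splits the edges into the three bichromatic bipartite graphs, and runs a $\Delta$-round propose--accept loop on each. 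For the statement at hand both give $O(1)$ postprocessing rounds, and your argument is arguably cleaner since it needs only the coloring. The paper's approach, however, is degree-independent: the same $O(1)$-round matching routine is reused verbatim in Theorem~\ref{thm:mm} for arbitrary-degree trees, whereas your $\Delta$-iteration loop would blow up there.
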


\begin{proof}
	By Theorem \ref{thm:bound3coloring}, we can color the tree with 3 colors using a \rc decomposition. Recall that in the decomposition, each node $v \in V_i$ has at most two neighbors in subgraph $\bigcup V_j,~i\leq j \leq L$. Let us define the parent nodes of $v$. Node $u$ is a parent of $v$ if $u$ is a neighbor of $v$ and (i) $u$ belongs to a strictly higher layer than $v$ or (ii) $u$ belongs to the same layer and has a higher ID. Let us direct all edges from each node $v$ towards its parents. For all colors $i$, perform the following. Node $v$ colored $i$ proposes to its highest ID parent $u$, and $u$ accepts the proposal of the highest ID suitor; now either $v$ join the matching with $u$ or $u$ joins the matching with some other node. In the latter case, we repeat the same procedure with $v$'s other possible parent. Note that when a node joins the matching, it disables all other incident edges. As a result, all nodes colored $i$ have either joined the matching or they have no outward oriented edges. After iterating through all colors we are done, since all nodes have either joined the matching or they have no incident edges, implying that all their original neighbors belong to the matching. Clearly this results in a maximal matching in $O(1)$ time and the memory requirements are satisfied.
\end{proof}

\section{General Algorithm} \label{sec:genalgo}

In this section, we give an $O(\log^2 \log n)$ time deterministic algorithm for 3-coloring trees (of arbitrary degree) in the low-space \mpc model using $O(m)$ global memory. Similarly to the constant-degree algorithm, our general algorithm consists of a decomposition stage (Section \ref{sec:decompstage}) and a coloring stage (Section \ref{sec:coloringstage}). 

\begin{theorem} \label{thm:3coloring}
	There is a deterministic $O(\log^2 \log n)$ time 3-coloring algorithm for trees in the low-space \mpc model using $O(m)$ words of global memory.
\end{theorem}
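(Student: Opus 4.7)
The plan is to lift the two-stage framework of Section \ref{sec:constalgo} to arbitrary-degree trees, with each stage now costing $O(\log^2 \log n)$ rounds instead of $O(\log \log n)$. Stage one builds a \rc decomposition with $l = 3$ (Definition \ref{def:rc}); stage two converts it into a proper 3-coloring. The sole reason for the extra $\log \log n$ factor is the central obstruction: when $\Delta$ is unbounded, a single graph-exponentiation step may blow a node's local storage past $O(n^\delta)$, since a $k$-hop tree neighborhood can contain up to $\Delta^k$ vertices. Beyond this issue, every ingredient from the constant-degree argument (layer-wise Linial, dependency orientation, directed exponentiation along out-degree-1 chains, freezing the lowest $\log \log n$ layers) is reusable essentially verbatim.

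For the decomposition stage, I would keep the outer loop of $\Theta(\log \log n^\delta)$ phases from Section \ref{sec:bounddecompstage}, where phase $i$ is responsible for simulating $\Theta(2^i)$ peeling steps. Instead of performing one exponentiation per phase, I would now run an inner loop of $O(\log \log n)$ rounds that interleaves peeling with exponentiations whose radius doubles each round, but is capped by what current local memory permits. The invariant to maintain is that by the end of each inner round the surviving subgraph has shrunk enough that its effective maximum degree is small enough for the \emph{next} doubling of the radius to fit in $O(n^\delta)$ memory per node and $O(m)$ memory globally. The counting inequality $n \cdot (1 - 1/(4l))^{2^{i-1}\cdot 2c} < n/\Delta^{2^i}$ from the constant-degree proof must be re-derived with the constant $\Delta$ replaced by a residual maximum degree that provably shrinks with every rake step in a tree. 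Summing $O(\log \log n)$ inner rounds over $O(\log \log n)$ outer phases and appending the $O(\delta \log n)$-peel tail then yields $O(\log^2 \log n)$.

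For the coloring stage, I would apply Linial's $O(\Delta^2)$-coloring in parallel within each layer (Section \ref{sec:boundcoloringstage}), collapsing it to 3 colors internally in $O(1)$ rounds since the in-layer degree is $\leq 2$. The remaining work is to reconcile colors across layer boundaries using directed exponentiation along the dependency orientation. The forward out-degree is $1$, but the backward in-degree may be as large as $\Delta$, so sending a finalized color back to many in-neighbors can overload the receivers and exhaust the $O(n^\delta)$ bandwidth. I would handle this by splitting each high-in-degree node into a balanced $n^\delta$-ary virtual tree of copies, exactly as described in the \mpc model subsection, and by batching the dissemination into $O(\log \log n)$ phases analogous to the decomposition: each phase doubles the depth of dependency path that a node has absorbed, and spends $O(\log \log n)$ inner rounds to push the resulting recoloring back down without exceeding local memory. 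Freezing and then sequentially processing the lowest $\log \log n$ layers, as in the constant-degree case, closes out the remaining $O(\log n)$-long dependency tails within the same $O(\log^2 \log n)$ budget.

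The main obstacle I expect is the tight coupling between exponentiation radius and residual degree in the decomposition stage: one must show that after $\Theta(2^i)$ simulated peels the residual degree (or an appropriate amortized surrogate) has dropped enough that a radius-$2^i$ exponentiation fits both locally and in the global $O(m)$ budget, and that this bound continues to hold when telescoped across all $O(\log \log n)$ outer phases and $O(\log \log n)$ inner rounds simultaneously. My preferred route is a per-edge charging scheme, where every tree edge pays a bounded amount each time it is pulled into a node's expanded view and the acyclicity of the input bounds how often any single edge can be re-collected. Getting this accounting right across both stages is the delicate part; the rest of the argument is a careful MPC-simulation of the classical Miller--Reif / Chang--Pettie analysis already invoked in Section \ref{sec:decomp}.
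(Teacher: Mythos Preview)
Your two-stage outline and the $O(\log\log n)\times O(\log\log n)$ round budget are correct, and your coloring-stage fix (virtual copies for high in-degree nodes during directed exponentiation) is essentially what the paper does. The genuine gap is in the decomposition stage.

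You propose to maintain the invariant that ``the residual maximum degree provably shrinks with every rake step,'' and to re-derive the constant-degree inequality $n(1-1/4l)^{2^{i-1}\cdot 2c} < n/\Delta^{2^i}$ with $\Delta$ replaced by this shrinking quantity. But rake-and-compress does \emph{not} reduce maximum degree: a vertex with $\Delta$ children, each heading a path of length $t$, retains degree $\Delta$ for $t$ peeling rounds. Its $2^i$-hop ball can genuinely contain $\Theta(\Delta\cdot 2^i)$ or more nodes throughout, so there is no shrinking degree to plug in, and the constant $c$ in the inequality (which was chosen so that $(1-1/4l)^c < 1/\Delta$) becomes unbounded. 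A per-edge charging scheme cannot rescue this either, because the obstruction is concentrated at a single vertex whose expanded view legitimately requires $\omega(n^\delta)$ storage.

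The paper abandons any attempt to control degree. Instead, each phase $i$ is given a memory budget $B_i = \min(n^\delta,\log^{2^{i-1}} n)$; every node exponentiates only until its stored neighborhood reaches size $\sqrt{B_i}$ and then simply \emph{gets stuck}. After $\log\log n$ exponentiation steps, each node runs $\log\log n + O(1)$ local peeling simulations on whatever ball it managed to collect, and stuck nodes learn their layer by being informed by deeper descendants that did not get stuck. The progress measure is not degree but \emph{subtree size}: rooting the tree arbitrarily, one shows (Lemma~\ref{lemma:removetree}) that every subtree of size at most $\sqrt{B_i}$ is entirely removed in phase $i$, and then a leaf-matching argument (every surviving degree-$\ge 3$ vertex can be charged to a distinct black leaf heading a removed subtree of size $>\sqrt{B_i}$, Claim~\ref{claim:corrleaf}) yields $n_{i+1}\le n_i/B_i$. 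This simultaneously gives the global-memory bound $n_i\cdot B_i \le n$ and termination in $\log\log n + O(1)$ phases once $B_i$ reaches $n^\delta$. Your proposal is missing this stuck-node mechanism and the subtree-size counting, which together are the actual engine of the proof.
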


\begin{proof}
Follows from Theorems \ref{thm:decompositioning} and \ref{thm:coloring}.
\end{proof}

\subsection{Decomposition Stage} \label{sec:decompstage}

We aim to compute a \rc decomposition of the input graph as defined in Definition \ref{def:rc} with $l=3$. Recall that a peeling step consist of one step of the \local algorithm of Section \ref{sec:bounddecompstage}, in which all degree-1 nodes and all nodes that belong to a path of length $\geq l$ place themselves into the current layer. When a node places itself into a layer, the node can be thought of as removed. When a node removes itself, it also disables all incident edges. When all nodes have removed themselves, the algorithm terminates. Consider the following \mpc algorithm.

\begin{enumerate}

\item Perform $\log \log n$ peeling steps.

\item For $i = 1,2,\dots,\log \log n + O(1)$ phases:

\begin{enumerate}
    \item Set $B_i \larr \min(n^\delta,\log^{2^{i-1}} n)$. Perform Steps (a)--(e) twice.
    \item All nodes partake in $r = \log \log n$ graph exponentiation steps, or as many as possible as long as their neighborhood is of size $\leq \sqrt{B_i}$.
    \item Now each node $v$ has performed some number of graph exponentiation steps, and knows its $2^k$-hop neighborhood ($k$ may be different for each node and $k \leq r$). Node $v$ simulates $2^k$ peeling steps $\log \log n + O(1)$ times. During each simulation: 
    \begin{itemize}
        \item If $v$ has successfully computed its own layer, $v$ first informs the nodes in its $2^k$-hop neighborhood of their layers and then removes itself from the graph. 
        \item If $v$ has failed to compute its own layer and some node(s) inform $v$ of its layer, $v$ accepts the assigned layer and removes itself from the graph. If no node has informed $v$ of its layer, $v$ does nothing.
    \end{itemize}
    \item Perform $O(1)$ peeling steps.
    \item All nodes wipe their memory.
\end{enumerate}
\end{enumerate}

Note that peeling in Step 1 is done in order to allocate enough global memory per node for the first phase. Peeling in Step 2d will play a crucial role in the analysis: it ensures that the root $v$ of subtree $T_i(v)$ removes itself in the proof of Lemma \ref{lemma:removetree}, and that enough nodes remove themselves in the proof of Lemma \ref{lemma:factordrop}.

\paragraph*{Simulating peeling steps.} Let us argue why we can simulate multiple peeling steps in one \mpc round in the general (arbitrary-degree) setting. Performing one peeling step in the \local model takes $l$ time. Hence, knowing your $x$-hop neighborhood in \mpc enables the simulation of $x/l$ peeling steps locally. Since $l$ is only a constant, and the term $1/l$ will dissipate in the big-$O$ notation of the runtime, we can pretend that knowing your $x$-hop neighborhood enables the simulation of exactly $x$ peeling steps without any effect on the algorithm. Note that as a result of simulating $x$ peeling steps, a node knows if it belongs to layer at most $x$. In addition to computing its own layer, $v$ can also compute if $u \in N^x(v)$ belongs to layer at most $x' \leq x$, if $N^{x'}(u) \subset N^x(v)$.

The previous simulation arguments hold both for nodes which did not stop partaking in exponentiation, as well as for nodes which did stop partaking in exponentiation, i.e., \textit{got stuck} in Step 2b. In Step 2c, all nodes perform $\log \log + O(1)$ simulations on their $2^k$-hop neighborhoods. If, as a result of a simulation, the layer of $v$ is

\begin{itemize}
    \item[] $\leq 2^k$: Node $v$ knows its exact layer and can remove itself. Before removing itself however, node $v$ informs nodes in $N^{2^k}(v)$ of their layers.
    \item[] $> 2^k$: Node $v$ does not know its layer. If some other node(s) $u$ informs $v$ of its layer, $v$ accepts the assigned layer and removes itself. Note that multiple nodes can inform $v$ of its layer. However, since all nodes simulate the same algorithm, they will inform the same layer.
\end{itemize}

\paragraph*{Correctness.} Informing neighborhood nodes of their layers is done in order to maintain the correctness of the decomposition. Consider node $v$ that has simulated $2^k$ peeling steps and knows its own layer. If all nodes in $N^{2^k}(v)$ place themselves in the layers simulated by $v$, the decomposition with regards to these nodes and $v$ will be correct.

Consider node $u$ such that it knows $N^x(u)$ and fails to compute its layer after a simulation step. Also, consider node $v$ that knows $N^y(v)$ and succeeds in computing its layer after a simulation step. If $v$ informs $u$ of its layer, it must be that $u \in N^y(v)$. Because $v$ knows the layer of $u$ and $u$ does not, it must be that $N^x(u) \subset N^y(v)$, since $v$ and $u$ both simulate the same algorithm. Therefore, if $u$ accepts the assigned layer from $v$, the decomposition with regards to $u$ and all other nodes in $N^x(u)$ will be correct.

\paragraph*{Technicality regarding layers.} In the algorithm description, we implicitly assume that all nodes are ``in sync'' regarding the layers that have been removed so far. This can be achieved in practice by assuming that in each simulation, we remove exactly $2^r=\log n$ layers, even if they are empty from the point of view of some nodes. Observe that each time a node is removed from the graph, the whole layer containing this node is removed. Also, each time a layer is removed, all lower layers are removed at the same time or earlier. Hence, by keeping track of variables $i$ and $r$, the aforementioned scheme is feasible. 

If no nodes get stuck in Step 2b, we will obtain a decomposition of size $O(\log n)$. If some nodes get stuck (because their $2^r$ neighborhoods contain $\omega(n^\delta)$ nodes), we will end up getting a slightly larger decomposition.

\paragraph*{Runtime analysis.} In the decomposition stage algorithm, parameter $B_i$ denotes the memory budget per node, i.e., an upper bound on the number of edges that fit into the memory of $v$ in phase $i$ such that we do not violate the local memory constraint nor linear global memory (this will become more apparent later). The idea behind a certain value $B_i$, is that we remove all at least $\sqrt{B_i}$ sized subtrees in each phase $i$. The reason we remove $\sqrt{B_i}$, and not $B_i$ sized subtrees is a memory management precaution related to graph exponentiation.

In the following analysis, let us imagine the graph as a rooted tree with all edges oriented towards some root. This ensures that each node has only one parent and the subtree $T_i(v)$ rooted at a node $v$ during phase $i$ is well defined. We emphasize that this orientation is used only for the analysis and we do not actually oriented any edges in the decomposition stage algorithm. During execution, the tree may shatter into a forest. In the following arguments we only address trees, but clearly everything also works in a forest. Let $T_i$ be a single connected component of what is left of the input graph $T$ in the beginning of phase $i$ and let $n_i=|T_i|$. We further present a number of lemmas in order to justify the runtime. 

\begin{figure}[H]
	\centering
	\includegraphics[width=0.79\textwidth]{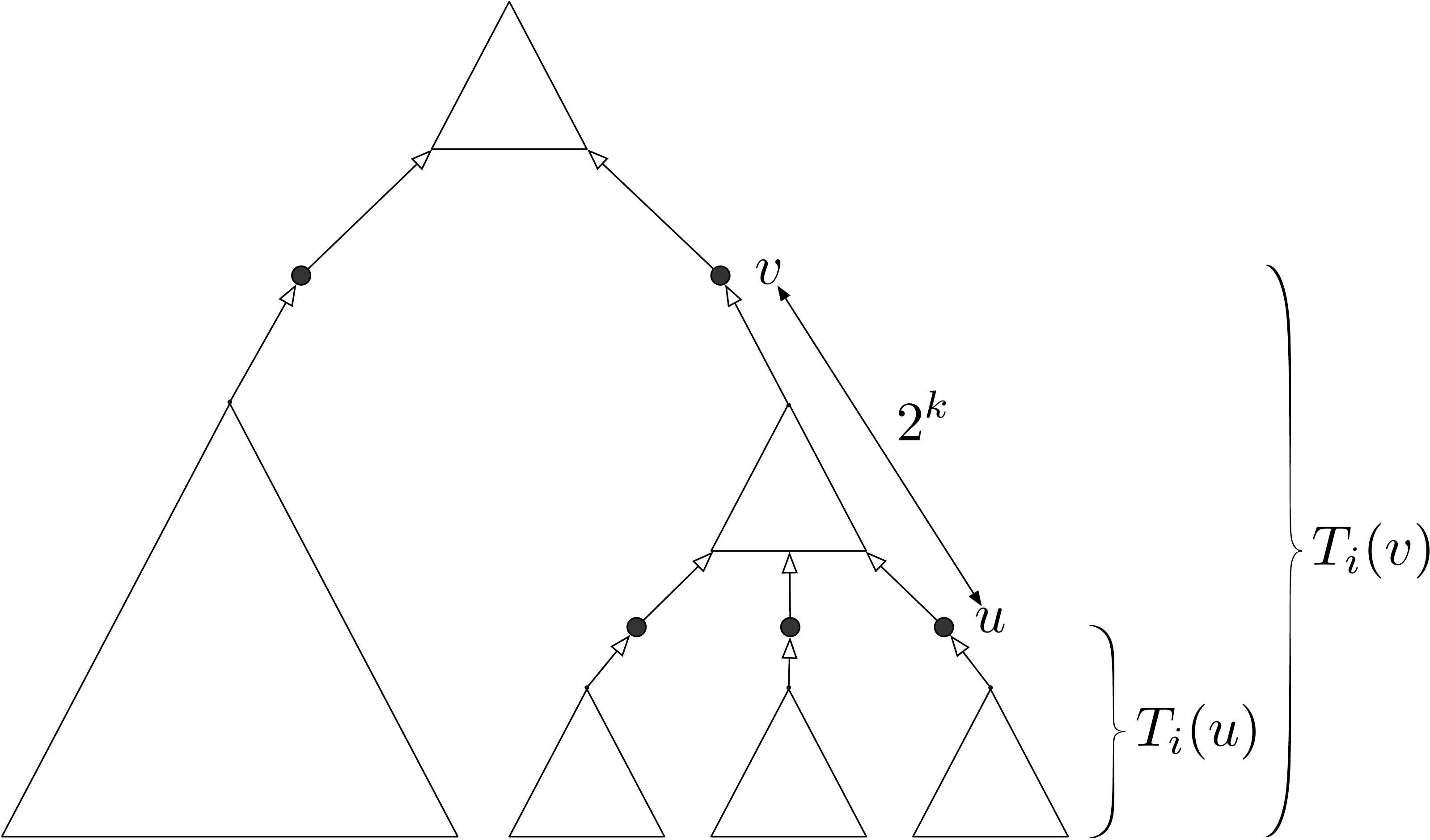}
	\caption{Subtree $T_i(v)$ rooted at node $v$ containing subtrees $T_i(u)$ rooted at nodes $u$.}
	\label{fig:ndeltaremoval}
\end{figure}

\begin{lemma} \label{lemma:enoughmemory}
	Consider $T_i(v)$ such that $|T_i(v)| \leq \sqrt{B_i}$. Nodes $u$ in $T_i(v)$ with distance at least $2^k$ from $v$ do not get stuck up to graph exponentiation step $k$.
\end{lemma}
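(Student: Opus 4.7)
The plan is to reduce the lemma to the purely combinatorial claim that $N^{2^k}(u) \subseteq T_i(v)$ for any $u \in T_i(v)$ whose distance from $v$ is at least $2^k$. Once this containment is established, we immediately get $|N^{2^{k'}}(u)| \leq |N^{2^k}(u)| \leq |T_i(v)| \leq \sqrt{B_i}$ for every $k' \leq k$, so the size cutoff of Step 2b is never triggered for $u$ during any of the first $k$ graph exponentiation steps, which is exactly what the lemma asserts.

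To prove the containment, I would work in the rooted-tree view of the analysis, taking $v$ as the root of $T_i(v)$, and write $d \coloneqq d(u,v) \geq 2^k$ for the depth of $u$ in $T_i(v)$. Pick any $w \in N^{2^k}(u)$. In a tree there is a unique simple $u$--$w$ path, of length at most $2^k$; this path first ascends $j_1$ edges from $u$ to the lowest common ancestor $a$ of $u$ and $w$, and then descends $j_2$ edges to $w$, with $j_1 + j_2 \leq 2^k$. In particular $j_1 \leq 2^k \leq d$, so $a$ sits at depth $d - j_1 \geq 0$ inside $T_i(v)$, i.e., $a \in T_i(v)$ (possibly $a = v$). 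Since $w$ is a descendant of $a$ in the rooted tree, the whole downward subpath from $a$ to $w$ stays inside $T_i(v)$, and hence $w \in T_i(v)$ as desired.

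The one thing to be careful about is that the distances and neighborhoods here refer to the current graph at the start of phase $i$ (on which the exponentiation of Step 2b actually runs), not to the original input graph; but this matches the rooting convention introduced for the analysis and creates no extra difficulty, since $T_i(v)$ is defined with respect to that same current graph. I do not expect any serious obstacle: the argument is essentially a packing fact about balls in a rooted tree, combined only with the definition of getting stuck in Step 2b.
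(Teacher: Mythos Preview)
Your proposal is correct and takes essentially the same approach as the paper: both reduce to the containment $N^{2^k}(u)\subseteq T_i(v)$ and conclude from $|T_i(v)|\le\sqrt{B_i}$. Your version is simply more explicit, supplying the lowest-common-ancestor argument that the paper leaves implicit in the sentence ``after graph exponentiation step $k$, all nodes contained in $N^{2^k}(u)$ are actually also contained in $T_i(v)$.''
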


\begin{proof}
Let $k$ be some arbitrary positive integer and consider any node $u$ in $T_i(v)$ with distance at least $2^k$ from $v$ as illustrated in Figure \ref{fig:ndeltaremoval}. The radii of neighborhoods of nodes in $T_i$ increase by a factor of at most 2 per step. Hence, after graph exponentiation step $k$, all nodes contained in $N^{2^k}(u)$ are actually also contained in $T_i(v)$. Since $|T_i(v)| \leq \sqrt{B_i}$, the claim follows.
\end{proof}

\begin{lemma} \label{lemma:removetree}
Subtree $T_i(v)$ gets removed during phase $i$ if $|T_i(v)| \leq \sqrt{B_i}$.
\end{lemma}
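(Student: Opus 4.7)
The plan is to show that every node of $T_i(v)$ is placed into a layer (i.e.\ removed) by the end of phase $i$, by combining the memory guarantee of Lemma \ref{lemma:enoughmemory} with the shallowness of the rake-and-compress decomposition when restricted to $T_i(v)$.

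First I would bound the decomposition depth inside $T_i(v)$. For every non-root $u \in T_i(v) \setminus \{v\}$, all neighbors of $u$ live inside $T_i(v)$, so the step at which $u$ is peeled by the \local algorithm is determined entirely by the internal structure of $T_i(v)$ together with whenever $v$ itself is peeled. The ``constant fraction per step'' bound underlying the global $O(\log n)$ decomposition of Chang and Pettie applies equally to this subgraph, so the layers used by nodes of $T_i(v)$ occupy at most $O(\log |T_i(v)|) = O(\log B_i) \leq O(\delta \log n)$ consecutive values, which is strictly below $2^r = \log n$ for $r = \log \log n$.

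Next I would identify ``deep'' witnesses that drive the removal. By Lemma \ref{lemma:enoughmemory}, every $u \in T_i(v)$ with $d(u,v) \geq 2^r$ completes all $r$ exponentiation steps without getting stuck, obtaining a view $N^{2^r}(u) \subseteq T_i(v)$ of radius $\log n$. Since this radius exceeds the decomposition depth inside $T_i(v)$, such a $u$ can correctly simulate enough peeling steps to compute its own layer and, by the same argument, the layer of every node in its view whose own layer fits inside the view. When $u$ removes itself in Step~2c, it broadcasts these layers throughout its $2^r$-hop ball, so the informed nodes in $T_i(v)$ accept their assigned layers and also remove themselves. Over the $\log \log n + O(1)$ simulation iterations of Step~2c, removal information spreads outward from this deep core, and any node whose personal $2^k$-hop view (for its individual $k$ from Step~2b) already covers its own layer computes it directly. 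The second iteration of Steps~(b)--(e) provides a fresh round of exponentiation on a strictly smaller graph, so nodes that previously got stuck because their view reached through $v$ into the rest of the tree now see further into what remains of $T_i(v)$. Any node of $T_i(v) \setminus \{v\}$ still alive after all simulations is stripped by the $O(1)$ peeling steps of Step~2d; and once all descendants of $v$ are gone, $v$ has degree one (via its parent edge) and is peeled as a leaf during those same $O(1)$ steps.

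The main obstacle I anticipate is the shallow regime where $T_i(v)$ itself has depth smaller than $2^r$, so no node satisfies $d(u,v) \geq 2^r$ and Lemma \ref{lemma:enoughmemory} does not directly hand us a deep witness. Here I would re-apply the lemma with smaller $k$: a node $u$ that halts at exponentiation step $k$ still has $N^{2^k}(u) \subseteq T_i(v)$, and the size bound $|T_i(v)| \leq \sqrt{B_i}$ together with the shallowness forces $u$'s own subtree depth to be so small that $2^k$ already surpasses the layers it needs. Thus even in this regime every node either computes its own layer or is informed by a relative within a constant number of simulation iterations, which together with the clean-up peelings of Step~2d closes the argument.
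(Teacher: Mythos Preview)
Your overall strategy---first handle nodes far from $v$ whose exponentiation never escapes $T_i(v)$, then work inward over the $\log\log n$ simulations---is the same as the paper's. But the argument breaks down precisely where the real work lies: the nodes close to $v$ that \emph{do} get stuck.

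The specific error is the sentence ``a node $u$ that halts at exponentiation step $k$ still has $N^{2^k}(u) \subseteq T_i(v)$.'' This is backwards. A node halts only when its collected neighborhood exceeds $\sqrt{B_i}$; since $|T_i(v)| \le \sqrt{B_i}$, halting means the view has already spilled through $v$ into the rest of $T_i$, so $N^{2^k}(u)\not\subseteq T_i(v)$. In particular, a stuck node $u$ need not see all of its own subtree $T_i(u)$, and nothing in your write-up explains why it can nevertheless compute its layer. Your fallback to the second pass of Steps~(b)--(e) does not rescue this: the graph \emph{outside} $T_i(v)$ need not have shrunk at all after the first pass, so nodes near $v$ can get stuck in exactly the same way again. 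Likewise, the assertion that ``any node of $T_i(v)\setminus\{v\}$ still alive after all simulations is stripped by the $O(1)$ peeling steps of Step~2d'' is unsupported---those $O(1)$ steps are only enough to peel $v$ itself once it has become a leaf.

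What the paper does instead is make the ``spread inward'' idea quantitative via Claim~\ref{lemma:killtrees}: if the surviving part of $T_i(v)$ currently has depth $d$, then every node at distance at least $3d/4$ from $v$ is far enough that Lemma~\ref{lemma:enoughmemory} guarantees it sees its entire own subtree $T_i(w)$, hence computes its layer in the next simulation. This shrinks the depth by a factor $4/3$ per simulation, so $\log\log n$ simulations reduce the depth from at most $\log n$ to $O(1)$, after which the peeling of Step~2d finishes off $v$. You need some such geometric-decrease statement; the vague ``removal information spreads outward'' and the incorrect shallow-regime claim do not substitute for it.
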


\begin{proof}
Let us first focus on subtrees $T_i(u)$ rooted at nodes $u$ with distance at least $2^{r}=\log n$ from $v$ in $T_i(v)$. If there are no such nodes, we can omit the current paragraph. Since $|T_i(v)| \leq \sqrt{B_i}$, it holds trivially that $|T_i(u)| \leq \sqrt{B_i}$. Also, according to Lemma \ref{lemma:enoughmemory}, nodes in $T_i(u)$ do not run out of memory during exponentiation and as a result, they will see their $2^r$-hop neighborhood. Since $r=\log \log n$, this results in $\log n$-hop neighborhood knowledge. As the size of the decomposition (if no nodes get stuck) is $O(\log n)$, we conclude that all $T_i(u)$ are removed in $O(1)$ simulations of Step 2c.

After a constant number of simulations, we are possibly left with some nodes in $T_i(v)$ that were not in any $T_i(u)$. Since we have no guarantee on the topology of $T_i$ outside $T_i(v)$, this may happen if, during exponentiation, these nodes get stuck. Since $r=\log \log n$, what is left of $T_i(v)$ has depth at most $\log n$. What is left of $T_i(v)$ is removed: after $\log \log n$ simulations of Step 2c, $T_i(v)$ should be empty, since by Claim \ref{lemma:killtrees} the depth of $T_i(v)$ decreases from $\log n$ by a factor of $4/3$ with each simulation. If the root $v$ of $T_i(v)$ is still present after the aforementioned local simulations, $v$ removes itself during the peeling steps of Step 2d at the latest, as $\deg(v)=1$.
\end{proof}

\begin{claim} \label{lemma:killtrees}
Let $d \leq \log n$ denote the depth of what is left of $T_i(v)$. Nodes in $T_i(v)$ with distance at least $3d/4$ from $v$ remove themselves during the next simulation. 
\end{claim}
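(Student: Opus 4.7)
The plan is to fix an arbitrary node $u \in T_i(v)$ with $d(v,u) \geq 3d/4$ and argue two things: (i) the layer of $u$ in the rake-and-compress peeling is at most $d/4 + O(1)$, and (ii) $u$ has completed enough graph exponentiation steps in Step 2b to see a neighborhood covering that many rounds, so that the local simulation in Step 2c faithfully determines $u$'s layer.

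For (i), I would exploit the fact that since $u$ lies at distance $\geq 3d/4$ from $v$ and $T_i(v)$ has depth $d$, the subtree $T_i(u)$ hanging below $u$ has depth at most $d/4$. A peeling step always removes every current leaf, so each round decreases the depth of $u$'s residual subtree by at least one. Hence after at most $d/4$ rounds every proper descendant of $u$ has been peeled, and within $O(1)$ additional rounds $u$ itself is peeled --- either as a newly created degree-$1$ node or as a node on a short degree-$2$ segment. Therefore $u$'s true layer is at most $d/4 + O(1)$.

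For (ii), I would set $k = \lceil \log(d/4) \rceil + O(1)$ so that $2^k \geq d/4 + O(1)$, and invoke Lemma \ref{lemma:enoughmemory}. Because $d(v,u) \geq 3d/4 \geq 2^k$, $u$ cannot get stuck during the first $k$ exponentiation steps of Step 2b and therefore stores its entire $2^k$-hop neighborhood. The simulation in Step 2c then performs $2^k$ peeling rounds on this stored view, which by (i) is enough to reach $u$'s true layer, so $u$ successfully computes it and removes itself per the rules of Step 2c.

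The main obstacle I anticipate is formally justifying that the $2^k$-round local simulation at $u$ agrees with the true peeling of the whole graph for those rounds --- in principle $u$'s peeling could depend on distant structure through the long-path rule. This is handled by the standard locality argument for rake-and-compress (each peeling round depends only on the $O(l)$-hop view of the previous graph), combined with the convention noted at the start of Section \ref{sec:decompstage} that $x$-hop visibility suffices to simulate $x$ peeling steps, with the constant $l$ absorbed into the big-$O$.
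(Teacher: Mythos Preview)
Your proposal is correct and follows essentially the same route as the paper: both argue that a node far enough from $v$ has a shallow remaining subtree, invoke Lemma~\ref{lemma:enoughmemory} to guarantee it completed enough exponentiation steps to see that subtree, and conclude it can compute its own layer in one simulation. The paper packages this more tersely by setting $d' \coloneqq \max\{2^k : 2^k \le d\}$ and showing nodes at distance $\ge d - d'/2 < 3d/4$ succeed, whereas you work directly with the $d/4$ bound; just be careful that the $+O(1)$ in your exponent $k$ is taken minimally, so that $2^k < d/2 + O(1) \le 3d/4$ and the hypothesis of Lemma~\ref{lemma:enoughmemory} is actually met.
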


\begin{proof}
Let $d' \larr \max \{2^k:2^k \leq d\}$. Since $d \leq \log n$ and $|T_i(v)| \leq \sqrt{B_i}$, nodes $w$ with distance at least $d-d'/2$ from $v$ see $T_i(w)$ and can determine the layer of each node in $T_i(w)$. Hence, nodes $w$ are successful in removing themselves in one simulation. Since $d' > d/2$, it holds that $d-d'/2 < 3d/4$ and the claim follows. 
\end{proof}

\begin{lemma} \label{lemma:factordrop}
In any arbitrary phase $i$, the number of nodes is reduced to at most $n_i/B_i$.
\end{lemma}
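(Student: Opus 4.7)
Let $s := \sqrt{B_i}$, so the target bound is $n_i/s^2$. The plan is to show that a single execution of Steps (b)--(e) reduces any tree with $N$ nodes to $O(N/s)$ nodes; since phase $i$ runs (b)--(e) twice, chaining this bound yields $n_i \to O(n_i/s) \to O(n_i/s^2) = O(n_i/B_i)$, with the implicit constant absorbed into the analysis.

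For a single iteration I would root the current tree arbitrarily, so that $T(v)$ denotes the subtree rooted at $v$. By Lemma~\ref{lemma:removetree}, Step 2c removes every subtree with $|T(v)| \le s$, so what survives Step 2c is exactly the heavy set $R = \{v : |T(v)| > s\}$. Because $|T(u)| \ge |T(v)|$ whenever $u$ is an ancestor of $v$, the set $R$ is upward-closed in $T$; hence the leaves of $R$ are pairwise incomparable in $T$, so their $T$-subtrees are pairwise disjoint and each has size strictly greater than $s$. This already bounds the number of leaves of $R$ by $N/s$. The standard rooted-tree identity (the sum of the number of children over all nodes equals $N-1$) then bounds the number of branching nodes of $R$ (those with at least two children) by $(\text{leaves of }R) - 1 \le N/s$. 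Thus the ``skeleton'' of $R$ (leaves plus branching nodes) has $O(N/s)$ nodes and, as it is itself a tree, $O(N/s)$ contracted edges.

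Every non-skeleton node of $R$ is a degree-$2$ node lying on a maximal chain that connects two skeleton nodes, and these chains are exactly what Step 2d is designed to destroy: one peeling round compresses every maximal degree-$2$ path of length $\ge l = 3$ in full while raking every degree-$1$ node, so after $O(1)$ peeling rounds only $O(1)$ residual degree-$2$ nodes can survive per contracted skeleton edge, for $O(N/s)$ nodes overall. Applying this single-iteration bound twice yields the claimed reduction from $n_i$ to $O(n_i/B_i)$.

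The main subtlety I expect will be handling the cascading effect of peeling cleanly: compressing a long degree-$2$ chain can strip a formerly-branching node of enough neighbors that subsequent rake steps within the same Step 2d remove it as well, potentially triggering further cascades. Such cascades only delete additional nodes, so the $O(N/s)$ upper bound is preserved, but the argument will need to state this monotonicity explicitly rather than tracking the precise topology across peeling rounds. A secondary point is that Lemma~\ref{lemma:removetree} is phrased in terms of $T_i$; I will invoke its structural content (``every subtree of size at most $s$ in the current tree is removed'') independently for each of the two iterations of (b)--(e), which is justified because the proof of that lemma is purely local in the tree being processed and does not depend on how that tree was obtained.
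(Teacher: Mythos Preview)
Your plan is correct and follows essentially the same route as the paper: root the tree, invoke Lemma~\ref{lemma:removetree} to kill every subtree of size at most $\sqrt{B_i}$, observe that the surviving ``heavy'' set has at most $n_i/\sqrt{B_i}$ leaves (hence at most that many branching nodes), argue that the $O(1)$ peeling steps of Step~2d collapse the remaining degree-$2$ chains down to the skeleton, and then iterate once more to square the gain. The paper phrases the same counting via a blue/black coloring and its Claim~\ref{claim:corrleaf} in place of your direct leaves-vs-branching identity, and is equally informal about how the residual constant $l^{O(1)}$ is absorbed by the extra peeling steps, so your level of detail matches.
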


\begin{proof}
Let us identify all nodes $w$ that have a subtree $T_i(w)$ rooted at them such that $|T_i(w)| \leq B_i$, and all parent nodes $u$ of nodes $w$ that have a subtree $T_i(u)$ rooted at them such that $|T_i(u)| > B_i$. Let us color all nodes $w$ blue and all parent nodes $u$ black. Let us further identify nodes
\begin{align*}
	X = \{ v \in T_i : \text{$v$ is not blue and of degree $\geq 3$}\}
\end{align*}

and denote $x = |X|$. By Lemma \ref{lemma:removetree}, at least all subtrees of size $\leq \sqrt{B_i}$ are removed during phase $i$, i.e., all blue nodes are removed during phase $i$. It is straightforward to realize that after all blue nodes are removed, all leaf nodes are black. At the latest, during the peeling steps of Step 2d, all nodes except possibly nodes in $X$ and paths of length $<l$ connecting nodes in $X$ are left in $T_i$. Hence, we are left with at most $x +x(l-1)=xl$ nodes in $T_i$.

By Claim \ref{claim:corrleaf}, each node in $X$ corresponds to at least one leaf node, and hence to at least one black leaf node. Recall that all black nodes correspond to a subtree of size $>\sqrt{B_i}$ in $T_i$. Since all black leaf nodes are removed, at the latest, during the peeling steps of Step 2d, at least $x \sqrt{B_i}$ nodes are removed during phase $i$ and the following inequality must hold
\begin{align*}
    x \sqrt{B_i} \leq n_i ~~\Leftrightarrow~~  x \leq \frac{n_i}{\sqrt{B_i}} ~~\Leftrightarrow~~ x l \leq \frac{ln_i}{\sqrt{B_i}} ~~\Leftrightarrow~~ n_{i+1} \leq \frac{ln_i}{\sqrt{B_i}}.
\end{align*}

After some additional peeling steps of Step 2d, and the fact that we perform Steps 2(b)--(e) twice, we can conclude that $n_{i+1} \leq n_i / B_i$.
\end{proof}

\begin{claim} \label{claim:corrleaf}
	In trees, each node of degree at least 3 corresponds to at least one degree-1 node. 
\end{claim}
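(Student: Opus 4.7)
The plan is to prove the claim by a simple handshake / degree-counting argument. First I would observe that in a tree $T$ with $n \geq 2$ nodes there are exactly $n-1$ edges, so $\sum_{v \in T} \deg(v) = 2(n-1)$. Letting $n_d$ denote the number of nodes of degree exactly $d$, combining this degree-sum identity with $\sum_d n_d = n$ (and noting that a connected tree on $n \geq 2$ nodes has $n_0 = 0$) yields
\[
\sum_{d \geq 1} (d - 2)\, n_d \;=\; (2n-2) - 2n \;=\; -2.
\]
Isolating $n_1$ then gives
\[
n_1 \;=\; 2 \,+\, \sum_{d \geq 3} (d-2)\, n_d \;\geq\; 2 \,+\, \bigl|\{v \in T : \deg(v) \geq 3\}\bigr|,
\]
since every summand on the right is at least $n_d$ for $d \geq 3$. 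Hence the number of leaves strictly exceeds the number of degree-$\geq 3$ nodes, and any injective map from the latter set into the former witnesses the desired correspondence.

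There is essentially no real obstacle here: this is a textbook consequence of the handshake lemma, and the argument extends from trees to forests by summing the inequality over the connected components (each contributing its own $+2$ slack). The only thing I would double-check is that the produced injection plugs correctly into the proof of Lemma \ref{lemma:factordrop}. By the moment the claim is invoked there, every blue node has already been peeled away and every remaining leaf is black, so any injection from $X$ into the set of leaves is automatically an injection from $X$ into the set of \emph{black} leaves. Because each black leaf was the root of a subtree of size $>\sqrt{B_i}$ that is fully removed during phase $i$, this immediately gives the desired bound $|X| \cdot \sqrt{B_i} \leq n_i$ used downstream.
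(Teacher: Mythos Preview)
Your handshake argument is correct and is the textbook way to prove this: from $\sum_v \deg(v)=2(n-1)$ one gets $n_1 = 2 + \sum_{d\ge 3}(d-2)n_d \ge 2 + |\{v:\deg(v)\ge 3\}|$, which immediately yields the required injection.

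The paper takes a different, more constructive route: it roots the tree and sweeps upward by depth, matching each node of degree~$\ge 3$ to one of its (at least two) descendant leaves, always leaving a spare descendant leaf for ancestors higher up. This produces an explicit matching in which every high-degree node is sent to a leaf \emph{below} it. That extra structural information is not actually exploited in Lemma~\ref{lemma:factordrop}, so your purely counting proof is just as good for the downstream application --- and arguably cleaner, since the paper's inductive sweep leaves the ``leave the other one be'' bookkeeping somewhat implicit. Your check that the injection lands in the black leaves (because, at the moment the claim is invoked, all surviving leaves are black) is exactly the point that makes the argument plug into Lemma~\ref{lemma:factordrop}.
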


\begin{proof}
	Consider rooting the tree such that it has depth $d$. Let us iterate though all depths and match all nodes of degree $\geq 3$ with one leaf node. We omit depth $d$, since there are no nodes of degree at least 3 at depth $d$. Consider any node $v$ at depth $d-1$ of degree $\geq 3$. Since it has at least two children and hence at least two descendant leaf nodes, it can match with one of them, and leave the other one be. This argument can be applied for each depth until all nodes of degree $\geq 3$ are matched with one of their descendant leaf nodes.
\end{proof}

\begin{lemma} \label{lemma:decompingruntime}
The decomposition stage algorithm terminates in at most $\log \log n + O(1)$ phases.
\end{lemma}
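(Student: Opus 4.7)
The plan is to iterate Lemma \ref{lemma:factordrop} to track the total node count across phases. Starting from $n_1 \leq n$, the lemma yields $n_{i+1} \leq n_i / B_i$, hence after $i$ phases $n_{i+1} \leq n / \prod_{j=1}^{i} B_j$. Termination is equivalent to $n_{i+1} < 1$, i.e., the remaining graph being empty, so the whole argument boils down to showing $\prod_{j=1}^{i} B_j \geq n$ for some $i = \log\log n + O(1)$.

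I would split the phases into two regimes based on which branch of the $\min$ in $B_j = \min(n^\delta, \log^{2^{j-1}} n)$ is active. For small $j$ the second term is smaller, specifically as long as $2^{j-1} \log\log n \leq \delta \log n$, which occurs up to $j = k$ with $k = \log\log n - \log\log\log n + \Theta(1)$. In this regime, the budgets form a telescoping double-exponential product
\[
\prod_{j=1}^{k} B_j \;=\; \log^{\sum_{j=1}^{k} 2^{j-1}} n \;=\; \log^{2^k - 1} n,
\]
and by choice of $k$ we have $2^k \log\log n \geq \delta \log n$, giving $\prod_{j=1}^{k} B_j \geq n^\delta / \log n$. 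Substituting back, $n_{k+1} \leq n^{1-\delta} \log n$.

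In the second regime, $j > k$, each phase contributes a factor of $B_j = n^\delta$ to the denominator. Hence $O(1/\delta)$ additional phases---a constant since $\delta \in (0,1)$ is fixed---reduce the node count from $n^{1-\delta} \log n$ to below $1$. Summing, after $k + O(1/\delta) = \log\log n + O(1)$ phases the decomposition is complete, which is exactly the claimed bound.

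The main obstacle I expect is the bookkeeping around the transition point $k$: the product of budgets has no clean closed form across the piecewise definition of $B_j$, so one has to verify that the doubly exponential growth in the first regime exactly lines up with the $n^\delta$ threshold rather than falling short by more than a constant factor in the exponent. The slack provided by the instruction to perform Steps 2(b)--(e) \emph{twice} per phase (which is what upgrades the intermediate bound $n_{i+1} \leq l n_i / \sqrt{B_i}$ to $n_{i+1} \leq n_i / B_i$ in Lemma \ref{lemma:factordrop}) is precisely what saves the estimate at the boundary case where $\log^{2^{k-1}} n$ is just slipping past $n^\delta$.
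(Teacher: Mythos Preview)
Your argument is correct and follows the same two-regime split as the paper: locate a threshold phase $k < \log\log n$ after which $B_i = n^\delta$, then finish in $O(1/\delta)=O(1)$ further phases via Lemma~\ref{lemma:factordrop}. The paper's version is slightly leaner in that it skips your product computation in the first regime entirely, using only the trivial bound $n_{k+1}\le n$ before applying Lemma~\ref{lemma:factordrop} in the second regime; your extra bookkeeping is harmless but unnecessary for this lemma.
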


\begin{proof}
Recall that in the beginning of each phase $i$, we set $B_i \larr \min(n^\delta,\log^{2^{i-1}} n)$. Note that we can assume that $n^\delta > \log n$. In some phase $k< \log \log n$ it holds that $\log^{2^{k-1}}n > n^\delta$, since $\log^{\log n}n = n^{\omega(1)}$. Hence, it holds that $B_i=n^\delta$ for all following phases $i>k$.

The are at most $n$ nodes left after $k$ phases, and by Lemma \ref{lemma:factordrop}, there are at most $n/n^{j\delta}$ nodes left after $j$ additional phases. Since $\delta$ is a constant, so is $j$ and the claim follows.
\end{proof}

\begin{theorem} \label{thm:decompositioning}
The decomposition stage algorithm gives a \rc decomposition of size $O(\log^2 \log n \cdot \log n)$ in $O(\log^2 \log n)$ time in the low-space \mpc model using $O(m)$ words of global memory.
\end{theorem}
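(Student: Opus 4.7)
The plan is to assemble the three key lemmas already proved—Lemmas \ref{lemma:removetree}, \ref{lemma:factordrop}, and \ref{lemma:decompingruntime}—and check each of the four claims of the theorem (correctness, runtime, decomposition size, and memory budget) in turn.

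For correctness, I would appeal to the arguments given in the paragraphs preceding the lemmas. The output is a valid \rc decomposition because, in every simulation, a node $v$ that determines its own layer from $N^{2^k}(v)$ also correctly determines the layer of every node $u$ with $N^{2^{k'}}(u) \subseteq N^{2^k}(v)$; since all nodes simulate the same deterministic peeling algorithm, any two nodes informing a common node $u$ of its layer agree on the value. Nodes whose own layer is larger than $2^k$ either receive a consistent assignment from a higher-vision neighbor or are removed during the $O(1)$ peeling steps of Step 2d (or they survive into the next phase with cleared memory). Hence the assignment coincides with the one produced by the \local peeling process, which by definition yields a valid \rc decomposition with $l=3$.

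For the runtime, Lemma \ref{lemma:decompingruntime} bounds the number of phases by $\log\log n + O(1)$. Each phase runs Steps 2(b)--(e) twice, where Step 2b consists of $r = \log\log n$ exponentiation rounds, Step 2c of $\log\log n + O(1)$ simulation rounds, and Step 2d of $O(1)$ peeling rounds. A single phase therefore takes $O(\log\log n)$ \mpc rounds, and together with the $\log\log n$ rounds of Step 1 the total is $O(\log^2\log n)$. For the decomposition size, observe that each simulation in Step 2c can populate up to $2^r = \log n$ consecutive layers (filling with empty layers when necessary to keep all nodes in sync). Multiplying $O(\log\log n)$ simulations per phase by $O(\log\log n)$ phases gives $O(\log^2\log n \cdot \log n)$ layers, dominating the $\log\log n$ layers of Step 1.

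The main obstacle is the memory bound, and I would handle it by induction on phases using the fact that the algorithm is deliberately tuned so that the product $n_i \cdot \sqrt{B_i}$ is preserved. Locally, Step 2b stops each node at neighborhood size $\sqrt{B_i} \leq n^{\delta/2}$, so $O(n^\delta)$ local memory is never violated. Globally, phase $i$ uses at most $O(n_i \cdot \sqrt{B_i})$ words. Before the cap kicks in, $B_{i+1} = B_i^2$, and Lemma \ref{lemma:factordrop} gives $n_{i+1} \leq n_i/B_i$; hence
\[
n_{i+1}\cdot\sqrt{B_{i+1}} \;\leq\; \frac{n_i}{B_i}\cdot B_i \;=\; n_i \;\leq\; n_i\cdot\sqrt{B_i},
\]
so the budget is non-increasing. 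The base case requires $n_1 \cdot \sqrt{B_1} = O(n)$, which is secured by Step 1: using the Chang--Pettie guarantee \cite{chang} that each peeling step removes at least a $1/(4l)$ fraction of surviving nodes, a suitably large constant multiple of $\log\log n$ peeling steps drives $n_1$ below $n/\sqrt{\log n}$, as needed. Once the cap $B_i = n^\delta$ activates (after a constant number of phases past the threshold $k$ in Lemma \ref{lemma:decompingruntime}), Lemma \ref{lemma:factordrop} shrinks $n_i$ by a factor $n^\delta$ per phase, so the remaining $O(1)$ phases stay safely within $O(n)$ global memory. Combining these four ingredients proves the theorem.
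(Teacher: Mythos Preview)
Your overall structure---correctness from the simulation paragraphs, runtime and size from Lemma~\ref{lemma:decompingruntime} plus round-counting, and memory via an invariant maintained across phases using Lemma~\ref{lemma:factordrop}---matches the paper's proof almost exactly. There is, however, a real gap in your memory argument.

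You claim that Step~2b caps each node's storage at $\sqrt{B_i}$, and then run the invariant $n_i\cdot\sqrt{B_i}=O(n)$. This undercounts the memory. A node continues exponentiating \emph{as long as} its current neighborhood has size $\le\sqrt{B_i}$; the very next step can square that size, so when the node finally stops it may be holding up to $B_i$ edges. This is precisely why the algorithm uses $\sqrt{B_i}$ as the stopping threshold---the paper even flags it as ``a memory management precaution related to graph exponentiation.'' The correct per-node budget in phase $i$ is therefore $B_i$, not $\sqrt{B_i}$, and the global bound to maintain is $n_i\cdot B_i\le n$. With $B_{i+1}=B_i^2$ and $n_{i+1}\le n_i/B_i$ from Lemma~\ref{lemma:factordrop}, you still get $n_{i+1}\cdot B_{i+1}\le n_i\cdot B_i$, so the induction survives; but the base case now requires $n_1\le n/\log n$ rather than $n/\sqrt{\log n}$. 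Your ``suitably large constant multiple of $\log\log n$ peeling steps'' still secures this (and the paper makes the same implicit absorption of the constant), so the fix is local. Once you replace $\sqrt{B_i}$ by $B_i$ throughout the memory paragraph, your argument coincides with the paper's.
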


\begin{proof}
The peeling steps of Step 1 remove $\log \log n$ layers. In each following simulation, $2^r=\log n$ layers are removed. Since $\log \log n + O(1)$ simulations are performed during each phase and, by Lemma \ref{lemma:decompingruntime}, there are at most $\log \log n + O(1)$ phases, it follows that the decomposition is of size $O(\log^2 \log n \cdot \log n)$.

The peeling steps of Step 1 take $\log \log n$ time.  Also, observe that each phase of Step 2 consist of $\log \log n$ exponentiation steps, $\log \log n + O(1)$ simulations, a constant number of peeling steps and memory cleanup. By Lemma \ref{lemma:decompingruntime} there are at most $\log \log n + O(1)$ phases, and hence the total runtime is $O(\log^2 \log n)$.

Now for the memory analysis. The decomposition stage algorithm never violates $O(n^\delta)$ local memory, since Step 1 does not require memory, and $B_i \leq n^\delta$ holds for all phases $i$ in Step 2. It is left to argue about linear global memory. Again, observe that Step 1 do not require memory. Since each peeling step removes a constant fraction of nodes, at most $n/\log n$ nodes are left after Step 1. In other words, we have allocated $\log n$ global memory for each node for the first phase of Step 2. Since all nodes wipe their memory at the end of each phase, memory usage can be analyzed for each phase separately. Recall that by Lemma \ref{lemma:factordrop}, at the end of each phase, it holds that $n_{i+1} \leq n_{i}/B_{i}$.

In the start of the first phase, there are at most $n_1 = n/\log n = n/B_1$ nodes in the graph. During the first phase, nodes use at most $B_1$ local memory, and after the first phase, there are at most
\begin{align*}
	n_2 \leq n_1 / B_1 = n / B_1^2 = n / B_2
\end{align*}

nodes left. Similarly, after the second phase, there are at most 
\begin{align*}
	n_3 \leq n_2 / B_2 = n / B_2^2 = n / B_3
\end{align*}

nodes left. Hence, in the beginning of each phase $i$ it holds that $n_i \leq n / B_i$. Since each node during phase $i$ uses at most $B_i$ local memory, this results in $n_i B_i \leq n$ global memory use.

During the algorithm, it holds that $n_{i} \leq n/B_i =  n/ \log^{2^{i-1}} n$ until the first phase $k< \log \log n$ such that $\log^{2^{k-1}}n > n^\delta$. Such $k$ always exists, since $\log^{\log n}n = n^{\omega(1)}$. For all following phases $i>k$ it holds that $n_i \leq n/B_i$ and $B_i=n^\delta$. We conclude that global memory use is $n_i B_i \leq n$ for all phases in the algorithm.
\end{proof}

\subsection{Coloring Stage} \label{sec:coloringstage}

Consider having the \rc decomposition described in Definition \ref{def:rc} with $l=3$, which we get from the previous section. Consider performing the same pre-processing steps and the same algorithm as in the constant-degree case of Section \ref{sec:boundcoloringstage}. Our setting is identical apart from two key aspects: (i) the decomposition is now of size $O(\log^2 \log n \cdot \log n)$ and (ii) the degree of each node is unbounded. We address the issues that arise from (i) and (ii) with the following modifications to the algorithm.

\begin{itemize}
	\item[(i)] The larger size of our decomposition only plays a role in Step 3, where we now have to perform $O(\log^2 \log)$ simulation instead of $O(1/\delta)$.
	\item[(ii)] Recall that in the directed exponentiation of the algorithm, even though each node $v$ depends on at most one node, multiple nodes may depend on $v$. After $\log \log n^\delta$ directed exponentiation steps, node $v$ may have to communicate with $\Delta^{ \delta \log n}$ nodes in lower layers. Since $\Delta$ is now unbounded, this communication may not be feasible in the low-space \mpc model. We fix this issue with an additional load balancing pre-processing step. Each machine hosting node $v$ that participates in directed exponentiation and has $\omega(1)$ ingoing edges simulates several virtual machines that are responsible for $O(1)$ incoming edges of $v$. Note that these virtual machines are responsible for different edges of $v$, and all need to keep a copy of the single outgoing edge of $v$. 	
\end{itemize}

\begin{theorem} \label{thm:coloring}
	The coloring stage algorithm terminates in $O(\log^2 \log n)$ time in the low-space \mpc model using $O(m)$ global memory.
\end{theorem}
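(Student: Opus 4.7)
The plan is to instantiate the four-step coloring algorithm of Section~\ref{sec:boundcoloringstage} on the \rc decomposition produced by Theorem~\ref{thm:decompositioning}, together with the two modifications (i) and (ii) stated at the beginning of Section~\ref{sec:coloringstage}. Correctness transfers verbatim: because $l=3$, every node still has at most one parent in a strictly higher layer, so the dependency structure is a forest of paths, and any node that sees its whole dependency path deterministically computes its final color from the pre-processed within-layer $3$-coloring.

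For the runtime I would sum the contributions of the four steps plus preprocessing. Linial's per-layer coloring and its reduction to a within-layer $3$-coloring run in $O(\log^* n)$; the load-balancing preprocessing that replaces each overloaded machine by virtual machines holding $O(1)$ incoming dependency edges each takes $O(1)$; freezing the bottom $\Theta(\log \log n)$ layers is local marking; and the $\log \log n^\delta = O(\log \log n)$ rounds of directed exponentiation give each still-active node a virtual view of its next $\delta \log n$ ancestors along its dependency path. The only step affected by modification~(i) is Step~3: each simulation resolves the next $\delta \log n$ dependency depths from the top, with a node whose $\delta \log n$-th ancestor has already been finalized querying that ancestor once over the virtual edges built during exponentiation and then completing its recoloring locally. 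Since the decomposition has size $O(\log^2 \log n \cdot \log n)$, a total of $O(\log^2 \log n / \delta) = O(\log^2 \log n)$ such simulations suffice, each costing $O(1)$ rounds. Step~4 sequentially processes the frozen $\Theta(\log \log n)$ layers in $O(\log \log n)$ rounds, for a grand total of $O(\log^2 \log n)$.

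For the memory, the local bound $O(n^\delta)$ is immediate: each virtual machine stores at most an outgoing path of length $O(\delta \log n) = O(\log n)$ together with $O(1)$ incoming edges, well within $n^\delta$ words. For the global bound I would exploit that every peeling step of the \rc construction eliminates at least a $1/(4l)$ fraction of the remaining nodes, so freezing the bottom $c \log \log n$ layers for a sufficiently large constant $c$ leaves only $O(n/\log n)$ active nodes; their combined exponentiation memory is then $O(n/\log n) \cdot O(\log n) = O(n)$. The load-balanced virtual machines number at most one per incoming dependency edge, contributing $O(m)$ further words, and Step~4 needs only $O(1)$ memory per node. Summing these pieces gives $O(m)$ global memory.

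The step I expect to be the main obstacle is the interaction between directed exponentiation and the load balancing of modification~(ii). A naive implementation in which every virtual machine that owns a slice of $v$'s incoming edges also stored $v$'s full exponentiated $(\delta \log n)$-hop outgoing path would inflate global memory to $\Theta(m \log n)$. I would therefore have to argue carefully that only one designated virtual machine per original node maintains the exponentiated outgoing path, while the remaining virtual machines merely forward the simulation-time queries arriving from the descendants they own; with this asymmetry the per-simulation communication still fits in $O(1)$ rounds, because each descendant queries its logical ancestor over the virtual edges built during exponentiation and the reply is routed back through the same designated machine. Once this routing is spelled out, the remaining accounting is routine.
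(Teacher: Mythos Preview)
Your proposal is correct and follows the paper's approach exactly: instantiate the four-step algorithm of Section~\ref{sec:boundcoloringstage} with the two modifications (i) and (ii), and re-do the round and memory accounting. The paper's own proof is far terser than yours; it simply defers correctness to Section~\ref{sec:boundcoloringstage}, cites modification~(i) for the $O(\log^2\log n)$ simulations in Step~3, and handles memory by the single remark that load balancing stores ``at most one copy of an outgoing edge per ingoing edge.''

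The obstacle you flag in your final paragraph is a real subtlety the paper glosses over, but it has a simpler resolution than the designated-machine routing you sketch. Load balancing in modification~(ii) is only applied to nodes that \emph{participate in directed exponentiation}, i.e., unfrozen nodes; frozen nodes send no messages and hence contribute no relevant incoming edges. Since after freezing there are only $O(n/\log n)$ active nodes, the dependency edges among active nodes---and hence the number of virtual machines---are also $O(n/\log n)$. Even if every virtual machine redundantly stores the full $O(\log n)$ outgoing path, the global total is $O(n/\log n)\cdot O(\log n)=O(n)$, so no asymmetry or forwarding scheme is needed. Your $\Theta(m\log n)$ worry came from counting \emph{all} incoming dependency edges rather than only those among active nodes.
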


\begin{proof}
	Correctness follows directly from Section \ref{sec:boundcoloringstage}. The runtime is otherwise identical, except for Step 3, where we perform $O(\log^2 \log n)$ simulations instead of $O(1/\delta)$ as discussed in (i). Local memory restrictions during directed exponentiation are respected, since similarly to the constant-degree setting, each node depends on at most one node, forming dependency paths. In order to not exceed total message bandwidth, we perform load balancing as described in (ii) and each node (or more precisely each virtual machine hosting a constant number of ingoing edges of some node) has constant indegree. Note that when performing load balancing, we store multiple copies of the outgoing edge per node. However, it is simple to realize that we always store at most one copy of an outgoing edge per ingoing edge. Hence, linear global memory is also respected.
\end{proof}

\subsection{MIS and Maximal Matching} \label{sec:mismm}

Similarly to Section \ref{sec:boundmismm}, we get the following $O(\log^2 \log n)$ time algorithms.

\begin{theorem} \label{thm:mis}
There is a deterministic $O(\log^2 \log n)$ time MIS algorithm for trees in the low-space \mpc model using $O(m)$ words of global memory.
\end{theorem}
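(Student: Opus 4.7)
The plan is to derive the MIS algorithm directly from the 3-coloring of Theorem \ref{thm:3coloring}, mirroring the reduction used in the constant-degree case (Theorem \ref{thm:boundmis}). First I would invoke Theorem \ref{thm:3coloring} to compute, in $O(\log^2 \log n)$ time and $O(m)$ global memory, a proper 3-coloring $c : V \to \{1,2,3\}$ of the input tree. Then, iterating over the three color classes in order $i = 1, 2, 3$, I would carry out the following single \mpc round per color: every node $v$ still present in the graph with $c(v) = i$ joins the output set $I$ and informs all its neighbors; every node that receives such a notification (or that already has $c(v) = i$ but a neighbor of smaller color is in $I$ — an event that cannot happen because color classes are independent) removes itself from the remaining graph.

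Correctness is immediate: within any one color class the surviving nodes form an independent set, so adding them all to $I$ simultaneously is safe; after processing all three color classes every node is either in $I$ or was a neighbor of some node added to $I$ in an earlier iteration, so $I$ is maximal. The post-coloring stage takes only $O(1)$ rounds, so the overall runtime remains $O(\log^2 \log n)$.

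The one step that is not completely syntactic is the ``$v$ tells all its neighbors'' broadcast, since nodes can have degree $\omega(n^\delta)$ and cannot be hosted on a single machine. This is resolved by the high-degree workaround already introduced in the preliminaries: a node $v$ of large degree is simulated by several virtual machines each holding $O(n^\delta)$ of its incident edges, arranged as an $n^\delta$-ary tree of constant depth. A ``$v$ joined the MIS'' message is injected at the root and disseminated through the tree in $O(1)$ rounds, and a ``remove yourself'' message from any leaf is delivered to the corresponding neighbor in $O(1)$ rounds as well. Since we only add $I$-membership bits and mark some nodes as removed, no extra copies of edges are created, so the $O(m)$ global memory bound inherited from Theorem \ref{thm:3coloring} is preserved and the $O(n^\delta)$ local memory bound is respected at every virtual machine. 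The main (very mild) obstacle is simply verifying this broadcast step; everything else is a direct consequence of the coloring theorem.
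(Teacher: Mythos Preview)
Your proposal is correct and matches the paper's own argument, which simply says the result follows from Theorem~\ref{thm:3coloring} together with the proof of Theorem~\ref{thm:boundmis} (i.e., iterate over the three color classes, add each class to the independent set, and remove its neighbors). Your additional discussion of the high-degree broadcast via the $n^\delta$-ary virtual-machine tree is just an explicit spelling-out of the workaround the paper already invokes in the preliminaries.
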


\begin{proof}
Follows from Theorem \ref{thm:3coloring} and the proof of Theorem \ref{thm:boundmis}.
\end{proof}

\begin{theorem} \label{thm:mm}
There is a deterministic $O(\log^2 \log n)$ time maximal matching algorithm for trees in the low-space \mpc model using $O(m)$ words of global memory.
\end{theorem}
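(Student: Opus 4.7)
The plan is to mirror the constant-degree argument from the proof of Theorem \ref{thm:boundmm}, but to be careful about the fact that nodes in the underlying tree may have unbounded degree. By Theorem \ref{thm:3coloring} we obtain, in $O(\log^2 \log n)$ rounds, a proper 3-coloring of the input tree along with the accompanying \rc decomposition with $l=3$. We reuse the parent definition of the constant-degree proof: $u$ is a parent of $v$ if $u$ is a neighbor of $v$ and either (i) $u$ lies in a strictly higher layer, or (ii) $u$ lies in the same layer and has a larger ID. Since $l=3$, each node has at most two neighbors in $\bigcup_{j\geq i} V_j$, so each node has at most two parents; orient every edge from a node towards its parents.

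Given this orientation, I would run the same proposal/acceptance scheme as in the constant-degree case. For each color $c \in \{1,2,3\}$ in turn, every still-unmatched node $v$ of color $c$ proposes along its highest-ID outgoing edge; each parent $u$ accepts the proposal from its highest-ID proposer (joining the matching with it and disabling all remaining incident edges); any $v$ whose proposal was rejected repeats the procedure along its second outgoing edge, if one exists. Since there are only three colors and each node has at most two parents, the total number of propose/accept sub-rounds is $O(1)$. Correctness (maximality) is inherited verbatim from the proof of Theorem \ref{thm:boundmm}: after processing all three colors, every node has either joined the matching or has lost every incident edge to a neighbor that did, so no augmenting edge remains.

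The only genuinely new issue, and the main obstacle, is that a single parent $u$ may now receive $\omega(n^\delta)$ proposals in one sub-round, which cannot all be delivered to or processed by a single machine. I would handle this with the load-balancing device already used in the coloring stage (see point (ii) of Section \ref{sec:coloringstage}) and in the preliminaries: a high-degree node $u$ is represented by several virtual machines arranged in an $n^\delta$-ary aggregation tree of depth $O(1)$, each copy handling $O(1)$ of its incoming edges. Computing the highest-ID proposer, broadcasting the accept/reject decision to all children, and deactivating the remaining edges are separable aggregations, so they all complete in $O(1)$ \mpc rounds; moreover each virtual copy stores only one copy of the (unique) outgoing edge per incoming edge, keeping total memory at $O(m)$.

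Putting the pieces together, the matching phase takes $O(1)$ additional rounds on top of the $O(\log^2 \log n)$ spent on 3-coloring, and it operates within $O(n^\delta)$ local and $O(m)$ global memory. This yields the claimed deterministic $O(\log^2 \log n)$ time maximal matching algorithm.
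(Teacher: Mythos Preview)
Your proof is correct and follows essentially the same route as the paper, which simply cites Theorems \ref{thm:3coloring} and \ref{thm:decompositioning} together with the proof of Theorem \ref{thm:boundmm}. You are more explicit about the one new wrinkle---a parent receiving $\omega(n^\delta)$ proposals---and resolve it with the $n^\delta$-ary aggregation-tree workaround the paper already sets up in the preliminaries for separable functions; this is exactly the intended mechanism, so nothing is missing.
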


\begin{proof}
Follows from Theorems \ref{thm:3coloring} and \ref{thm:decompositioning}, and the proof of Theorem \ref{thm:boundmm}.
\end{proof}

\clearpage
\phantomsection
\bibliographystyle{alphaurl}
\bibliography{coloring-trees}

\end{document}